\renewenvironment{proof}{\vspace{.1cm}\noindent{\sc Proof.}\hspace{0.10cm}\,\,}{$\hfill\Box$\vspace{.1cm}} 
\newtheorem{theorem}            {Theorem}[section] 
\newtheorem{definition}         [theorem]{Definition}
\newtheorem{lemma}              [theorem]{Lemma} 
\newtheorem{proposition}		[theorem]{Proposition} 
\newtheorem{corollary}		[theorem]{Corollary}
\newtheorem{example}		[theorem]{Example}
\newtheorem{remark}	      [theorem]{Remark}
\newcommand{\hygreen}[1]{{\color{magenta}{#1}}}
\newcommand{\dd}   {{\rm d}\hbox{\hskip 0.5pt}}
\newcommand{\bbm}[1]{\left[\begin{matrix} #1 \end{matrix}\right]}
\newcommand{\sbm}[1]{\left[\begin{smallmatrix} #1
   \end{smallmatrix}\right]}
\def\BibTeX{{\rm B\kern-.05em{\sc i\kern-.025em b}\kern-.08em
    T\kern-.1667em\lower.7ex\hbox{E}\kern-.125emX}}
\begin{document}
\title{Contraction analysis of time-varying DAE systems via auxiliary ODE systems}
\author{Hao Yin, Bayu Jayawardhana and Stephan Trenn
\thanks{The authors are with the Jan C. Willems Center for Systems and Control, Faculty of Science and Engineering, University of Groningen, 9747AG Groningen, The Netherlands.  {\tt\small hao.yin@rug.nl, b.jayawardhana@rug.nl and s.trenn@rug.nl}.}
\thanks{This work was supported by China Scholarship Council.}
}
\maketitle
\begin{abstract}
This paper studies the contraction property of time-varying differential-algebraic equation (DAE) systems by embedding them to higher-dimension ordinary differential equation (ODE) systems. The first result pertains to the equivalence of the contraction of a DAE system and the uniform global exponential stability (UGES) of its variational DAE system. Such equivalence inherits the well-known property of contracting ODE systems on a specific manifold. Subsequently, we construct an auxiliary ODE system from a DAE system whose trajectories 
encapsulate those of the corresponding variational DAE system. Using the auxiliary ODE system, a sufficient condition for contraction of the time-varying DAE system is established by using matrix measure 
which allows us to estimate an lower bound on the parameters of the auxiliary system. Finally, we apply the results to analyze the stability of time-invariant DAE systems, and to design observers for time-varying ODE systems. 
\end{abstract}

\section{Introduction}
As a generalization of 
ordinary differential equation (ODE) systems, differential-algebraic equation (DAE) systems have been studied for the past decades due to their relevance in representing numerous modern engineering systems with constraints. Some well-known examples of such engineering systems are electrical networks with Kirchhoff's laws \cite{Monshizadeh2018} and mechanical systems with rigid body constraints \cite{Rabier2000}. 
The DAE systems can also be used to model power systems \cite{Hill1990} and chemical processes \cite{Kumar1999}. Typically, DAE systems consist of a set of differential equations describing the system dynamics, and a set of algebraic equations describing the constraints. Solving DAE systems is more challenging than solving ODE systems due to the implicit relationship between 
the differential equations and algebraic equations. DAE systems can be classified as either index-1 \cite{Hill1990} or higher index \cite{Gerdts2003} by the degree of the highest derivative of the algebraic equation. Index-1 DAE systems are particularly important in control theory, because the algebraic restriction can be substituted directly into the dynamics. Consequently, they can numerically be solved by using standard techniques for solving ODE \cite{Takens1976}, which allows numerical simulations to be carried out 
straightforwardly 
\cite{Shampine1999}.

The analysis of DAE systems has been widely studied in the literature \cite{Groß2016,Hill1990,Nugroho2022,Nugroho2023,Praprost1994,Milano2012,Persis2016} with a large body of works concerned with index-1 DAE systems. In \cite{Groß2016}, the authors show that any solvable DAE power systems are 
of index one. By this result, robust $H_{\infty}$ observers for power networks are presented in \cite{Nugroho2022}, and a load- and renewable-following control approach for power system is proposed in \cite{Nugroho2023}. There are several methods for analyzing the stability of DAE systems, including the Lyapunov method \cite{Hill1990,Persis2016}, the energy-based methods \cite{Praprost1994}, as well as, the eigenvalue analysis \cite{Milano2012}. In \cite{Hill1990}, the DAE systems are embedded in reduced ODE systems, where Lyapunov method can be applied to get the Lyapunov stability. 
In \cite{Persis2016}, an incremental Lyapunov function is applied to analyze the asymptotic stability and optimal resource allocation for a network preserved microgrid model with active and reactive power loads. In \cite{Praprost1994}, 
bifurcation theory is used to characterize the stability boundary for power systems in DAE, and an energy function method is developed to guarantee both rotor angular stability and voltage stability. A method for defining the small-signal stability of delay DAE systems based on eigenvalue analysis and an approximation of the characteristic equation at equilibrium points are proposed in \cite{Milano2012}. 

In all of the above results, the properties of time-invariant DAE systems are analyzed, 
while the extension of these results to the time-varying DAE systems remains non-trivial. The exponential stability and  robustness of linear time-varying DAE systems with index-1 are studied in \cite{Berger2014,Chyan2008}. In \cite{Chyan2008}, the Bohl
exponent theory 
for stability analysis of ODEs is extended to DAEs. 
When there are perturbations in systems' matrices, a stability radius has been investigated in \cite{Berger2014}, which includes the lower bound computation of the stability radius. 

As one of the stability analysis methods for time-varying systems, which has gained popularity in recent years, 
contraction analysis focuses on the relative trajectory of the nonlinear time-varying system rather than a specific equilibrium point. There are many methods to analyze the contractivity of nonlinear time-varying ODE systems in literature, such as, \cite{Lohmiller1998,Forni2013,Barabanov2019,Yin2023,Andrieu2021} among many others. An ODE system is contracting if and only if the associated variational system is uniformly globally exponentially stable (UGES) \cite{Barabanov2019}. In \cite{Lohmiller1998}, the contraction property can be guaranteed if the largest eigenvalue of the symmetric part of the variational systems is uniformly strictly negative. Finsler–Lyapunov functions are introduced in \cite{Forni2013} to analyze the incremental exponential stability of the system. Mode-dependent Lyapunov functions for contracting switched nonlinear systems are presented in \cite{Yin2023}. In \cite{Andrieu2021},  the focus is on investigating transverse exponential stability (a generalized notion of contraction) by employing a Lyapunov matrix transversal equation. In the context of DAE systems, the contraction analysis thereof has recently been presented in \cite{Nguyen2021}. In \cite{Nguyen2021}, the authors proved that if the algebraic equation satisfies some sufficient conditions, the contractivity of time-invariant DAE systems can be obtained through exponential stability analysis of the corresponding reduced variational ODE systems using matrix measure. In Section \ref{sectionode}, we show that this approach can be restrictive and is not applicable to 
some time-varying DAE systems.

In this paper, we analyze the contraction property of index-1 nonlinear time-varying DAE systems by using an ODE approach. As our first main result, we establish that the uniform global exponential stability (UGES) of the  variational DAE dynamics is a sufficient and necessary condition to the contractivity of the original DAE systems. This condition can be viewed as a DAE counterpart of the results for ODE presented in 
\cite[Prop.~1]{Barabanov2019}. Subsequently, we construct an auxiliary ODE system whose convergence property can encapsulate the same 
property of the variational DAE dynamics. With this construction, we can analyze the contractivity of the DAE systems by applying conventional control theory (such as Lyapunov approach, matrix measure method) to the auxiliary ODE system. As our second contribution, we provide sufficient conditions on the contraction of nonlinear time-varying DAE system by using matrix measure method to analysis the UGES property of the auxiliary ODE system. In general, these conditions ensure the contraction of the DAE system without analysing its reduced system. 
As our third contribution, we employ these conditions to design observers for time-varying ODE systems by treating the output as an algebraic equation. Furthermore, we investigate the exponential stability of time-invariant DAE systems by ensuring that the DAE system is contractive and the equilibrium lies within its trajectory set.

The paper is organized as follows. In Section 2, we present preliminaries and problem formulation. Our main results are proposed in Section 3, where we present 
necessary and sufficient conditions for the contractivity of time-varying DAE systems, and the ODE approach. The numerical simulations and applications are provided in Section 4 and the conclusions are given in Section 5. 

\section{Preliminaries and problem formulation}

Throughout this paper, we consider the following nonlinear time-varying DAE systems 
\begin{equation}\label{daet}
 \left\{\begin{matrix}
\dot{w}=f(t,w,z),\\ 
0=g(t,w,z),
\end{matrix}\right.   
\end{equation}
where $w(t)\in \mathbb{R}^{n}$ is the state vector, $z(t)\in \mathbb{R}^{m}$
refers to the algebraic vector, $f:\mathbb{R}_+\times \mathbb{R}^n \times \mathbb{R}^m\rightarrow \mathbb{R}^n$ is the vector field, and $g:\mathbb{R}_+\times \mathbb{R}^n \times \mathbb{R}^m\rightarrow \mathbb{R}^m$ describes the algebraic manifold. We assume that $f$ is continuously differentiable and $g$ is twice-continuously differentiable. In this note, we consider only the continuously differentiable solutions $
w(t)=\varphi(t_0,w_0,z_0)$ and $z(t)=\psi (t_0,w_0,z_0)$ of \eqref{daet} with admissible
initial conditions $(w_0,z_0)\in \mathbb{R}^n\times \mathbb{R}^m$ satisfying the algebraic constraint
\begin{equation}\label{ini1}
 g(t,w_0,z_0)=0.  
\end{equation}
We assume 
the DAE system \eqref{daet} is of index-1, i.e.\ 
the (partial) Jacobian matrix 
$\frac{\partial g}{\partial z}
(t,w,z)\in\mathbb{R}^{m\times m}
$ is invertible 
for all $(t,w,z)$
. This assumption guarantees the existence
and uniqueness of 
a local 
solution to \eqref{daet} for any initial condition satisfying \eqref{ini1} (see also, \cite{Hill1990}). 
Throughout the paper, we will assume that every local solution can be extended to a solution defined on the whole time domain $[0,\infty)$.

Note that the index-1 assumption allows to apply the implicit function theorem to solve the constraint \eqref{ini1} for $z_0$ for any given $(t_0,w_0)$ and in the following we will therefore write $z_0(w_0)$ to denote the unique value for $z_0$ which satisfies \eqref{ini1} for an arbitrarily given $w_0$ (we omit the dependency on $t_0$ as we consider the initial time as fixed in the following analysis).

\begin{definition}\label{d1}
A time-varying DAE system 
\eqref{daet} is called {\em contracting} if there exists positive numbers $c$ and $\alpha$ such that for any pair of solutions $W_{i}(t)=\sbm{w_i(t) \\z_i(t)}\in \mathbb{R}^n$ of \eqref{daet} with 
$i=1,2$, we have
\begin{equation} \label{eq:d1}
\begin{aligned}
\|
W_{1}(t)-W_{2}(t)
\|\leq ce^{-\alpha t}\|
W_{1}(t_{0})-W_{2}(t_{0})
\|, \quad \forall t\geq t_0.
\end{aligned}
\end{equation}
\end{definition}
\vspace{0.3cm}

In order to study contractivity of the DAE system \eqref{daet}, we will analyse the (uniform) stability of the corresponding variational DAE systems. The variational system of system \eqref{daet} is given by 
\begin{equation}\label{vari}
 \left\{\begin{matrix}
\dot{\xi }=\frac{\partial f}{\partial w}(t,w(t),z(t))\cdot \xi +\frac{\partial f}{\partial z}(t,w(t),z(t))\cdot\nu,\\ 
0=\tfrac{\partial g}{\partial w}(t,w(t),z(t))\cdot\xi +\frac{\partial g}{\partial z}(t,w(t),z(t))\cdot\nu,
\end{matrix}\right.   
\end{equation}
where $w(t)$ and $z(t)$ are solutions of \eqref{daet}. We omit
the explicit parametrization $(t,w,z)$ whenever it is clear
from the context. 
\begin{definition}\label{d2}
The variational DAE system \eqref{vari} is called uniformly globally exponentially stable (UGES), if there exist positive numbers $c$, $\alpha$ (independent of the solution $W(\cdot)$, but may be dependent on $t_0$) such that for every solution $\Xi(t):=\sbm{\xi(t)\\\nu(t)}\in\mathbb{R}^n$ of \eqref{vari} the inequality
\begin{equation} \label{eq:d3}
\begin{aligned}
\|\Xi(t)\| \leq ce^{-\alpha t}\|\Xi(t_{0})\|, 
\end{aligned}
\end{equation}
holds for all $t\geq t_0$.
\end{definition}

Note that in the above definition and in the remainder of the paper, the norm $\|\cdot\|=\|\cdot\|_p$ is some $p$-norm, $p\geq 1$, where we usually omit the index $p$, unless a specific case is discussed; while the exponential rate $\alpha$ in \eqref{eq:d3} does not depend on the chosen norm, the constant $c$ in general depends on it.

By invertibility of the Jacobian matrix $\frac{\partial g}{\partial z}$, we can express the reduced system of equation (6) in the following form:
\begin{equation} \label{eq:re}
\dot{\xi}=\left(\frac{\partial f}{\partial w}-\frac{\partial f}{\partial z}\left[\frac{\partial g}{\partial z}\right]^{-1}\tfrac{\partial g}{\partial w}\right)\xi.
\end{equation}
In \cite[Proposition 1]{Nguyen2021}, a sufficient condition is introduced to analyse the contractivity of time-invarient DAE systems by analysing its reduced systems \eqref{eq:re}. However, this approach may not be applicable in the time-varying case. As a simple example, consider the following time-varying DAE system
\begin{equation}\label{exam1}
\left\{\begin{matrix}
\dot{w}=-w+e^{-3t}z,\\ 
0=e^{3t}w+z,
\end{matrix}\right.   
\end{equation}
which has the same form as its variational DAE system. 
Its reduced system is given by $\dot{w}=-2w$, which is a contractive system. Consequently, the trajectories of the system are given by $w(t)=w_0e^{-2t}$ and $z(t)=-w_0e^{t}$, which shows the DAE system is not contracting ($z$ is not contracting). This is due to the unboundedness of $\left[\frac{\partial g}{\partial z}\right]^{-1}\tfrac{\partial g}{\partial w}=e^{3t}$, which implies that the method proposed in \cite{Nguyen2021} is not applicable when dealing with time-varying systems. Notice that, in this particular case, the contraction analysis problem can be effectively addressed by comparing the exponential rate of the reduced system with an exponential bound on $\left[\frac{\partial g}{\partial z}\right]^{-1}\tfrac{\partial g}{\partial w}$. However, it is worth acknowledging that obtaining information about the reduced system or establishing a bound for $\left[\frac{\partial g}{\partial z}\right]^{-1}\tfrac{\partial g}{\partial w}$ can be a challenging task for certain systems.

The objective of this paper is to provide a sufficient condition that guarantees the contractivity of time-varying DAE systems, even in situations where prior knowledge about the DAE system is unavailable.


Before stating our main results, let us recall the definition of matrix measure \cite{Desoer1972}, which
is also known as logarithmic norm as follows. For a given matrix $A\in \mathbb{R}^{n\times n}$ 
the matrix measure $\mu_{p}(A)$ w.r.t.\ to the (induced) $p$-norm $\|\cdot\|_p$ 
is defined by
\begin{equation}\label{mm}
 \mu_{p}(A):=\underset{h\rightarrow 0^+}{\lim}\frac{||I+hA||_{p}-1}{h}.
\end{equation}
It follows that when $p=1$, $2$ or $\infty$, we have
 \begin{equation}\label{mm1}
 \begin{aligned}
&\mu_1(A)=\max_j\Big(a_{jj}+\underset{i\neq j}{\sum }|a_{ij}|\Big),
\\& \mu_2(A)=\max_i\Big(\lambda _{i}\left\{\tfrac{A+A^{\top}}{2}\right\}\Big), \text{ or }
\\& \mu_{\infty}(A)=\max_i\Big(a_{ii}+\underset{j\neq i}{\sum }|a_{ij}|\Big),
\end{aligned}
\end{equation}
respectively. In this paper, all the norms are defined using a $p$-norm.

\section{Main results}
In this section, we firstly establish an equivalent relationship 
between the contraction of a DAE system and the uniform global exponential stability (UGES) of its variational DAE system. Secondly, we construct an auxiliary ODE system that encapsulates the behaviors of the variational DAE system. Thirdly, a sufficient condition is presented that guarantees the UGES of the auxiliary ODE system and is numerically implementable. 

\subsection{A necessary and sufficient condition} 
\begin{proposition}\label{proposition1}
The DAE system \eqref{daet} is contracting if and only if the corresponding variational DAE system \eqref{vari} is UGES.
\end{proposition}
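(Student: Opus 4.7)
My plan is to adapt the classical ODE equivalence \cite[Prop.~1]{Barabanov2019} to the DAE setting, using the index-1 assumption (and hence the implicit function theorem) to overcome the algebraic constraint. Both directions rest on the same idea: linking pointwise comparisons of DAE solutions to infinitesimal variational comparisons through smooth families of admissible initial conditions lying on $\{g(t_0,\cdot,\cdot)=0\}$.

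\textbf{Necessity (contraction $\Rightarrow$ UGES of \eqref{vari}).} I fix a DAE solution $W(t)=(w(t),z(t))$ and an arbitrary variational solution $\Xi(t)=(\xi(t),\nu(t))$ of \eqref{vari} along it. I will construct a one-parameter family $W^{\epsilon}(t)$ of admissible DAE solutions with $W^{0}(t)=W(t)$ and $\partial_{\epsilon}W^{\epsilon}(t)|_{\epsilon=0}=\Xi(t)$ by taking $w^{\epsilon}(t_0):=w(t_0)+\epsilon\xi(t_0)$ and $z^{\epsilon}(t_0):=z_0(w^{\epsilon}(t_0))$, which is admissible for every $\epsilon$ by the implicit function theorem. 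Differentiating $g(t_0,w^{\epsilon}(t_0),z^{\epsilon}(t_0))=0$ in $\epsilon$ at $\epsilon=0$ recovers the linearized initial constraint from \eqref{vari}, so $\partial_{\epsilon}z^{\epsilon}(t_0)|_{\epsilon=0}=\nu(t_0)$; smooth dependence on initial data for index-1 DAE systems then forces $\partial_{\epsilon}W^{\epsilon}(t)|_{\epsilon=0}$ to satisfy \eqref{vari} with the same initial value as $\Xi$, and by uniqueness it coincides with $\Xi(t)$ for all $t$. Applying \eqref{eq:d1} to the pair $(W^{\epsilon},W)$, dividing by $\epsilon$, and letting $\epsilon\to 0$ will yield \eqref{eq:d3} with the same $c$ and $\alpha$.

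\textbf{Sufficiency (UGES of \eqref{vari} $\Rightarrow$ contraction).} Given two DAE solutions $W_1,W_2$, I will join their initial values by the admissible path $w_{\lambda}(t_0):=(1-\lambda)w_1(t_0)+\lambda w_2(t_0)$, $z_{\lambda}(t_0):=z_0(w_{\lambda}(t_0))$, $\lambda\in[0,1]$, and flow forward to obtain a smooth family $W_{\lambda}(t)$ of DAE solutions. Then $\Xi_{\lambda}(t):=\partial_{\lambda}W_{\lambda}(t)$ solves the variational DAE along $W_{\lambda}$, so UGES yields $\|\Xi_{\lambda}(t)\|\le c\,e^{-\alpha(t-t_0)}\|\Xi_{\lambda}(t_0)\|$ uniformly in $\lambda$, and integrating $W_2(t)-W_1(t)=\int_0^1\Xi_{\lambda}(t)\,d\lambda$ gives
\[
\|W_1(t)-W_2(t)\|\le c\,e^{-\alpha(t-t_0)}\int_{0}^{1}\|\Xi_{\lambda}(t_0)\|\,d\lambda.
\]
The step I expect to be the main obstacle is controlling this remaining $\lambda$-integral by a constant multiple of $\|W_1(t_0)-W_2(t_0)\|$. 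From the linearized initial constraint, $\Xi_{\lambda}(t_0)=\big(w_2(t_0)-w_1(t_0),\,-[\tfrac{\partial g}{\partial z}]^{-1}\tfrac{\partial g}{\partial w}(w_2(t_0)-w_1(t_0))\big)$ evaluated at $(t_0,w_{\lambda}(t_0),z_{\lambda}(t_0))$, so what is really needed is a bound on $[\tfrac{\partial g}{\partial z}]^{-1}\tfrac{\partial g}{\partial w}$ along the segment at the fixed time $t_0$. Continuity of $g$ together with compactness of $[0,1]$ supplies such a bound, and absorbing it into the constant completes the contraction estimate.
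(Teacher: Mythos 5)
Your overall strategy coincides with the paper's: both directions hinge on realizing variational solutions as derivatives of one-parameter families of admissible DAE solutions obtained by perturbing only $w_0$ and recovering the algebraic initial value from the constraint via the implicit function theorem, and your sufficiency direction integrates the variational flow along a path of admissible initial conditions, which is exactly the paper's line-integral identity \eqref{pf13} in a $\lambda$-parametrization. Your necessity argument is in fact slightly cleaner than the paper's, which argues by contradiction with ad hoc constants $\tfrac{3}{2}$, $\tfrac{4}{5}$, $\tfrac{5}{6}$; your direct division by $\epsilon$ yields \eqref{eq:d3} with the same $c$ and $\alpha$ immediately.

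The one genuine gap is in the final step of your sufficiency direction. You control $\int_0^1\|\Xi_\lambda(t_0)\|\,\dd\lambda$ by bounding $\left[\tfrac{\partial g}{\partial z}\right]^{-1}\tfrac{\partial g}{\partial w}$ along the segment joining $w_1(t_0)$ to $w_2(t_0)$ via continuity and compactness of $[0,1]$. That bound depends on the segment, hence on the particular pair of solutions, whereas Definition \ref{d1} quantifies $c$ \emph{before} the pair: the constant must be uniform over all pairs, and $\left[\tfrac{\partial g}{\partial z}\right]^{-1}\tfrac{\partial g}{\partial w}$ need not be bounded over the whole constraint manifold at time $t_0$. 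The repair is precisely the paper's observation that one never needs to estimate $\nu_\lambda(t_0)$ at all: the variational transition matrix is singular with vanishing columns in the $\nu_0$-direction (see \eqref{pf131}), so the UGES bound specializes to $\|\Xi_\lambda(t)\|\leq c e^{-\alpha t}\|\xi_\lambda(t_0)\|$ with only the $w$-component of the initial variation on the right-hand side (cf.\ \eqref{pf14}--\eqref{pf15}). Since $\xi_\lambda(t_0)=w_2(t_0)-w_1(t_0)$ for every $\lambda$ and $\|w_2(t_0)-w_1(t_0)\|\leq\|W_2(t_0)-W_1(t_0)\|$ for any $p$-norm, this yields the uniform constant directly, with no bound on $\left[\tfrac{\partial g}{\partial z}\right]^{-1}\tfrac{\partial g}{\partial w}$ required.
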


\begin{proof}
Let us establish a relationship between the solutions of \eqref{daet} and those of \eqref{vari}. Let 
$\sbm{w(t) \\ z(t)}=\sbm{\varphi(t,w_0,z_0(w_0)) \\ \psi (t,w_0,z_0(w_0))}$ and $\sbm{\hat{w}(t)\\ \hat{z}(t)}=\sbm{\varphi(t,w_0+\delta \xi _0,z_0(w_0+\delta \xi_0)) \\ \psi (t,w_0+\delta \xi_0,z_0(w_0+\delta \xi_0))}$ be two trajectories of \eqref{daet} with initial conditions $\sbm{w_0 \\ z_0(w_0)}$
and 
$\sbm{w_0+\delta \xi_0\\ z_0(w_0+\delta \xi_0)}$,
respectively, where $\delta$ is a sufficiently small positive constant and $\xi_0$ will be related later to the initial condition of \eqref{daet}. 
As they are solutions of \eqref{daet}, they satisfy 
\begin{equation}\label{ini20}
  g(t,w(t),z(t))=0, 
\end{equation}
and
\begin{equation}\label{ini2}
  g(t,\hat{w}(t),\hat{z}(t))=0.
\end{equation}
Denote the partial Jacobian matrices of $\varphi$ and $\psi$ with respect to the second or third argument evaluated at $(t,w_0,z_0(w_0))$ by $\Phi_{w_0}(t),\Phi_{z_0}(t),\Psi_{w_0}(t),\Psi_{z_0}(t)$, respectively. 
By differentiating both sides of \eqref{ini20} with respect to $w_0$, we have 
\begin{equation}\label{ini3}
\begin{aligned}
0&=\tfrac{\partial g}{\partial w}(t,w(t),z(t))\cdot\Big(\Phi_{w_0}(t)+\Phi_{z_0}(t)\cdot z_0'(w_0)\Big)\\&+\tfrac{\partial g}{\partial z}(t,w(t),z(t))\cdot \Big(\Psi_{w_0}(t)+\Psi_{z_0}(t)\cdot z_0'(w_0)\Big).
\end{aligned}
\end{equation}
In the following, we will show that
\begin{equation}\label{varitaj}
\left\{\begin{matrix}
\xi(t):=\lim\limits_{\delta \rightarrow 0} \frac{\varphi(t,w_0+\delta \xi _0,z_0(w_0+\delta \xi_0))-\varphi(t,w_0,z_0(w_0))}{\delta },\\ 
\nu (t):=\lim\limits_{\delta \rightarrow 0} \frac{\psi(t,w_0+\delta \xi _0,z_0(w_0+\delta \xi_0))-\psi(t,w_0,z_0(w_0))}{\delta},
\end{matrix}\right.
\end{equation}
are a pair of solutions of \eqref{vari} with initial value 
\[
\begin{aligned}
\xi(t_0) &= 
\xi_0
,\\ 
\nu (t_0) &= 
\nu_0=\nu_0(\xi_0) :=
\lim\limits_{\delta \rightarrow 0} \tfrac{z_0(w_0+\delta \xi_0)-z_0(w_0)}{\delta}.
\end{aligned}
\]
We can rewrite \eqref{varitaj} as 
\begin{equation}\label{varitaj0}
\left\{\begin{matrix}
\xi(t)=\Big(\Phi_{w_0}(t)+\Phi_{z_0}(t)\cdot z_0'(w_0)\Big)\cdot \xi_0,\\ 
\nu (t)=\Big(\Psi_{w_0}(t)+\Psi_{z_0}(t)\cdot z_0'(w_0)\Big)\cdot\xi_0.
\end{matrix}\right.
\end{equation}
From \eqref{ini3} and \eqref{varitaj0}, we know that \eqref{varitaj} satisfies $0=\tfrac{\partial g}{\partial w}\xi +\frac{\partial g}{\partial z}\nu$.\\
The flow $\varphi (t,w_0,z_0(w_0))$ of \eqref{daet} satisfies
\begin{equation}\label{fl1}
\begin{aligned}
&\varphi (t,w_0,z_0(w_0))=w_0\\&+\int_{0}^{t}f\big(\varphi (\tau,w_0,z_0(w_0)),\psi  (\tau,w_0,z_0(w_0))\big)\dd \tau,
\end{aligned}
\end{equation}
and similarly, the flow $\varphi(t,w_0+\delta\xi_0,z_0(w_0+\delta\xi_0))$ satisfies
\begin{equation}\label{fl2}
\begin{aligned}
&\varphi(t,w_0+\delta\xi_0,z_0(w_0+\delta\xi_0))=w_0+\delta\xi_0+\\&\int_{0}^{t}f\big(\varphi (\tau,w_0+\delta\xi_0,z_0(w_0+\delta\xi_0)),\\&\psi  (\tau,w_0+\delta\xi_0,z_0(w_0+\delta\xi_0))\big)\dd \tau.
\end{aligned}
\end{equation}
Hence,
\begin{equation}\label{varitaj1}
\begin{aligned}
&\xi (t)=\xi_0+\int_{0}^{t}\lim_{\delta \rightarrow 0}\frac{1}{\delta }\times\\&\Big(f\big(\varphi (\tau,w_0+\delta\xi_0,z_0(w_0+\delta\xi_0)),\psi  (\tau,w_0+\delta\xi_0,z_0(w_0+\delta\xi_0))\big)\\&-f\big(\varphi (\tau,w_0,z_0(w_0)),\psi  (\tau,w_0,z_0(w_0))\big)\Big)\dd \tau
\end{aligned}
\end{equation}
Clearly,
\begin{equation}\label{varitaj2}
\begin{aligned}
&\lim_{\delta \rightarrow 0}\frac{1}{\delta }\times\\&\Big(f\big(\varphi (\tau,w_0+\delta\xi_0,z_0(w_0+\delta\xi_0)),\psi  (\tau,w_0+\delta\xi_0,z_0(w_0+\delta\xi_0))\big)\\
&\quad -f\big(\varphi (\tau,w_0,z_0(w_0)),\psi  (\tau,w_0,z_0(w_0))\big)\Big)
\\&=\tfrac{\partial f}{\partial w}(\tau,w(\tau),z(\tau))\cdot \Big(\Phi_{w_0}(\tau)+\Phi_{z_0}(\tau)\cdot z_0'(w_0)\Big)\cdot \xi_0\\
&\quad +\tfrac{\partial f}{\partial z}(\tau,w(\tau),z(\tau))\Big(\Psi_{w_0}(\tau)+\Psi_{z_0}(\tau)\cdot z_0'(w_0)\Big)\cdot \xi_0\\
&\overset{\eqref{varitaj0}}{=} \tfrac{\partial f}{\partial w}(\tau,w(\tau),z(\tau))\cdot \xi(\tau) + \tfrac{\partial f}{\partial z}(\tau,w(\tau),z(\tau)) \cdot \nu(\tau).
\end{aligned}
\end{equation}

Substituting this back to \eqref{varitaj1} and differentiating with respect to time gives us
\begin{equation}\label{varitaj8}
\begin{aligned}
\dot{\xi }(t)=\frac{\partial f}{\partial w}\xi(t) +\frac{\partial f}{\partial z}\nu(t).
\end{aligned}
\end{equation}
Altogether this shows that indeed $\xi(t)$, $\nu(t)$ given by \eqref{varitaj} is a solution of \eqref{vari}. 

We can now show the sufficiency result. 

\emph{Contracting $\Rightarrow$ UGES.}
Let $c$ and $\alpha$ be the constants corresponding to the contractivity condition. 
Seeking a contradiction, assume the variational DAE system \eqref{vari} is not UGES. Then 
there exists a solution $\sbm{w(\cdot) \\z(\cdot)}$ of \eqref{daet} and an initial value $\sbm{w_0 \\z_0(w_0)}$ such that for the corresponding solution $\sbm{\xi(\cdot) \\\nu(\cdot)}$ of \eqref{vari} we have that 
for  $c':=\frac{3}{2}c$ and $\alpha':=\alpha$, 
there exists $T>0$ such that 
\begin{equation} \label{pf90}
\left\|\bbm{\xi(T) \\\nu(T)}\right\| > c'e^{-\alpha' T} \left\|\bbm{\xi_0 \\\nu_0}\right\| = \frac{3}{2}ce^{-\alpha T} \left\|\bbm{\xi_0 \\\nu_0}\right\| \qquad \forall \xi_0.
\end{equation}  
Let $\sbm{\hat{w}(\cdot) \\\hat{z}(\cdot)}$ be a solution of \eqref{daet} with initial value $\sbm{\hat{w}(t_0) \\\hat{z}(t_0)}
=
\sbm{w_0+ \delta \xi_0 \\z_0(w_0+ \delta \xi_0)}$ for sufficiently small 
$\delta\in\mathbb{R}$. By definition, $
\xi(t)= \lim_{\delta \rightarrow 0}\frac{\hat{w}(t)-w(t)}{\delta }$ and $
\nu (t)=\lim_{\delta \rightarrow 0}\frac{\hat{z}(t)-z(t)}{\delta }
$;
hence, for a sufficiently small $\delta>0$, we have that at time $T$,  
\begin{equation} \label{pf72}
\begin{aligned}
 \frac{\left\|
\bbm{\hat{w}(T) \\\hat{z}(T)}-\bbm{w(T) \\z(T)}
\right\|}{\delta }>\frac{4}{5}\left\|\bbm{\xi(T) \\\nu(T)}
\right\|,
\end{aligned}
\end{equation}
where the lower-bound constant $\frac{4}{5}$ is chosen arbitrarily for the following computation of bounds. 
Similarly, since $\nu_0:=\lim_{\delta\to 0} \frac{\hat{z}(t_0)-z(t_0)}{\delta}$, for a sufficiently small $\delta>0$, we have that
\begin{equation} \label{pf720}
\begin{aligned}
 \left\|\nu_0\right\|>\frac{5}{6} \frac{\left\|\hat{z}(t_0)-z(t_0)\right\|}{\delta}.
\end{aligned}
\end{equation}
Combining \eqref{pf90}, \eqref{pf72} and \eqref{pf720}, we obtain 
\begin{align} 
 \left\|
\bbm{\hat{w}(T) \\\hat{z}(T)}  -\bbm{w(T) \\z(T)}
\right\| &\overset{\eqref{pf72}}{>}\frac{4}{5}\delta\left\|\bbm{\xi(T) \\\nu(T)}\right\|\overset{\eqref{pf90}}{>} \frac{6}{5}c e^{-\alpha T}\left\| 
\bbm{\delta \xi_0 \\\delta \nu_0}
\right\|\nonumber\\&\overset{\eqref{pf720}}{>}ce^{-\alpha T}\left\|
\bbm{\frac{6}{5}\hat{w}(t_0) \\\hat{z}(t_0)}-\bbm{\frac{6}{5}w(t_0) \\z(t_0)}
\right\|\nonumber\\&>ce^{-\alpha T}\left\|
\bbm{\hat{w}(t_0) \\\hat{z}(t_0)}-\bbm{w(t_0) \\z(t_0)}
\right\|
\nonumber,
\end{align}
for all $\xi_0$, the last inequality arises from the property of the $p$-norm.
This is in contradiction to the contractivity of \eqref{daet} and concludes the proof of the sufficiency part. 

\emph{UGES $\Rightarrow$ Contracting.} Let us consider two solutions $\sbm{
w(\cdot)\\ 
z(\cdot)
} = \sbm{
\varphi(\cdot,w_0,z_0(w_0))\\ 
\psi(\cdot,w_0,z_0(w_0))
}$ and $\sbm{
\hat{w}(\cdot)\\ 
\hat{z}(\cdot)
} = \sbm{
\varphi(\cdot,\hat{w}_0,\hat{z}_0(\hat{w}_0))\\ 
\psi(\cdot,\hat{w}_0,\hat{z}_0(\hat{w}_0))
}$ of \eqref{daet}. 
Consequently, we can utilize the fundamental theorem of calculus for line integrals to obtain
\begin{equation}\label{pf13}
  \left\{\begin{matrix}
 \hat{w}(t) - w(t) = \int_{w_0}^{\hat{w}_0} \frac{\dd\varphi(t,\zeta,z_0(\zeta) )}{\dd \zeta} \dd \zeta\\ 
\hat{z}(t) - z(t)=\int_{w_0}^{\hat{w}_0} \frac{\dd\psi(t,\zeta,z_0(\zeta))}{\dd \zeta}\dd \zeta 
\end{matrix}.\right.
\end{equation}
According to \eqref{varitaj}, one has
\begin{equation}\label{pf131}
\bbm{\xi(t)\\ 
\nu(t)}=
\bbm{\frac{\dd\varphi(t,w_0,z_0(w_0) )}{\dd w_0} &0 \\ 
\frac{\dd\psi(t,w_0,z_0(w_0) )}{\dd w_0}&0
 }
\bbm{\xi_0\\\nu_0}.
\end{equation}
Then, in \eqref{pf131}, $
\bbm{\frac{\dd\varphi(t,w_0,z_0(w_0) )}{\dd w_0} &0 \\ 
\frac{\dd\psi(t,w_0,z_0(w_0) )}{\dd w_0}&0
 }$ 
is the (singular) state transition matrix of the DAE \eqref{vari}. 
From the UGES property of \eqref{vari}, it follows with similar arguments as in the necessity proof of  \cite[Thm.~4.11]{Khalil2002} that
\begin{equation} \label{pf14}
\begin{aligned}
&\left\|\bbm{\frac{\dd\varphi(t,w_0,z_0(w_0) )}{\dd w_0} &0 \\ 
\frac{\dd\psi(t,w_0,z_0(w_0) )}{\dd w_0}&0
 }
\right\|\leq ce^{-\alpha t}, 
\end{aligned}
\end{equation}
from which it follows\footnote{
In general, for any $p$-norm and its induced matrix norm we have $\left\|\bbm{M&0}\right\|=\underset{\left\|\sbm{x\\y} \right\|=1}{\sup}\left\|\bbm{M&0}\bbm{x\\y}\right\|=\underset{\left\|\sbm{x\\y} \right\|=1}{\sup}\left\|Mx\right\|\geq \underset{\left\|x \right\|=1}{\sup}\left\|Mx\right\|=\left\|M\right\|$. The inequality follows from the set inclusion $\{x|\left\|x\right\|=1\}=\{x|\left\|\sbm{x\\0}\right\|=1\}\subseteq \{x|\exists y:\left\|\sbm{x\\y}\right\|=1\}$.\\
} that 
\begin{equation} \label{pf15}
\begin{aligned}
\left\|\bbm{\frac{\dd\varphi(t,w_0,z_0(w_0) )}{\dd w_0}\\ 
\frac{\dd\psi(t,w_0,z_0(w_0) )}{\dd w_0}}
\right\|\leq ce^{-\alpha t}
\end{aligned}
\end{equation}

Using \eqref{pf15} to get the upper bound of \eqref{pf13}, we have
\begin{equation} \label{pf16}
\begin{aligned}
\left\|
\bbm{\hat{w}(t)\\ 
\hat{z}(t)}
- \bbm{w(t)\\ 
z(t)}
\right\|&=\left\|
\int_{w_0}^{\hat{w}_0} \bbm{\frac{\dd\varphi(t,\zeta,z_0(\zeta) )}{\dd \zeta}\\ 
\frac{\dd\psi(t,\zeta,z_0(\zeta) )}{\dd \zeta}}\dd \zeta\right\|
\\&
\leq  ce^{-\alpha t}\left\|
\hat{w}_0 -
w_0
\right\|
\\&
\leq  ce^{-\alpha t}\left\|
\bbm{\hat{w}_0\\ 
z_0(\hat{w}_0)} -
\bbm{w_0\\ 
z_0(w_0)}
\right\|,
\end{aligned}
\end{equation}
where the latter inequality follows from $\|x\|=\| (\begin{smallmatrix} x\\ 0 \end{smallmatrix})\| \leq \|(\begin{smallmatrix} x\\ y \end{smallmatrix})\|$, which is true for any $p$-norm. This shows that \eqref{daet} is contracting and the proof is complete.
\end{proof}

Proposition \ref{proposition1} shows that the contractivity of DAE systems inherit the property of contracting ODE systems (i.e., \cite[Prop.~1]{Barabanov2019}) on a corresponding manifold $\tfrac{\partial g}{\partial w}(t,w,z)\xi +\frac{\partial g}{\partial z}(t,w,z)\nu=0$. 
From \eqref{pf131} and \eqref{pf15}, we can deduce that $\|\xi(t)\|\leq ce^{-\alpha t}\|\xi_0\|$, which means that when the system \eqref{vari} is exponentially stable, its reduced system \eqref{eq:re} is exponentially stable. In the case of a time-invariant system, 
the reverse implication holds automatically. This is due to the fact that it becomes feasible to guarantee the boundedness of $\left[\frac{\partial g}{\partial z}\right]^{-1}\tfrac{\partial g}{\partial w}$ within a particular invariant set. As a result, we can utilize Proposition \ref{proposition1} for analyzing the stability of time-invariant DAE systems, as demonstrated in Corollary \ref{corollary1}. Apart from the time-invariant system, the presence of an invariant set might not be guaranteed for a time-varying system, i.e., $\left[\frac{\partial g}{\partial z}\right]^{-1}\tfrac{\partial g}{\partial w}$ could potentially become unbounded. This implies that the contractivity of the time-varying DAE system cannot be derived from the contractivity of the reduced system in the absence of bounded condition on $\left[\frac{\partial g}{\partial z}\right]^{-1}\tfrac{\partial g}{\partial w}$. 
In Lemma \ref{lemma1} below we relax the condition on $\left[\frac{\partial g}{\partial z}\right]^{-1}\tfrac{\partial g}{\partial w}$ by analysing the stability of an auxiliary ODE system instead of analysing the contractivity of the reduced system. By adopting this approach, it becomes unnecessary to have prior knowledge about the reduced system \eqref{eq:re}, as demonstrated in Example \ref{ex1}.

\subsection{The ODE approach}\label{sectionode}
In this section, we present an ODE approach to analyze the contractivity of nonlinear time-varying DAE systems. 
We present a construction of auxiliary ODE whose trajectories can represent the convergence property of the variational DAE system. As a consequence, 
we can apply traditional control theories to the auxiliary ODE in order to analyze the UGES of \eqref{vari}.

For a given variational DAE system \eqref{vari}, we construct its auxiliary ODE system by 
\begin{equation}\label{auxode}
\begin{aligned}
 \left\{\begin{matrix}
\dot{\xi}_\gamma=\frac{\partial f}{\partial w}\xi_\gamma+\frac{\partial f}{\partial z}\nu_\gamma,\\ 
\dot{\nu}_\gamma=-\left[\frac{\partial g}{\partial z}\right]^{-1}\Big[\Big(\gamma\tfrac{\partial g}{\partial w}+\frac{\dd}{\dd t}(\tfrac{\partial g}{\partial w})+\tfrac{\partial g}{\partial w}\frac{\partial f}{\partial w}\Big)\xi_\gamma\\\qquad+\Big(\gamma\frac{\partial g}{\partial z}+\frac{\dd}{\dd t}(\frac{\partial g}{\partial z})+\tfrac{\partial g}{\partial w}\frac{\partial f}{\partial z}\Big)\nu_\gamma\Big],
\end{matrix}\right.
\end{aligned}   
\end{equation}
where $\gamma$ is a no-negative constant, $\frac{\dd}{\dd t}(\tfrac{\partial g}{\partial w})=\frac{\partial^2 g}{\partial w \partial t}+\frac{\partial^2 g}{\partial^2 w}f-\frac{\partial^2 g}{\partial w \partial z}\left[\frac{\partial g}{\partial z}\right]^{-1}(\frac{\partial g}{\partial t}+\tfrac{\partial g}{\partial w}f)$, and $\frac{\dd}{\dd t}(\frac{\partial g}{\partial z})=\frac{\partial^2 g}{\partial z \partial t}+\frac{\partial^2 g}{\partial w \partial z}f+\frac{\partial^2 g}{\partial^2 z}\left[\frac{\partial g}{\partial z}\right]^{-1}(\frac{\partial g}{\partial t}-\tfrac{\partial g}{\partial w}f)$. It is worth noting that in various control problems that involve the system's output, such as state observer design and output feedback control, the output of the system, denoted as $z=h(w,t)$, can be considered as a time-varying constraint. In other words, the function $g(t,w,z)$ takes the form of $z-h(w,t)$. In such cases, the term $\frac{\dd}{\dd t}(\frac{\partial g}{\partial z})$ becomes zero, and $\frac{\dd}{\dd t}(\tfrac{\partial g}{\partial w})$ can be simplified to $\frac{\partial^2 g}{\partial w \partial t}+\frac{\partial^2 g}{\partial^2 w}f$.
\begin{lemma}\label{lemma1}
The variational DAE system \eqref{vari} is exponentially
stable if 
there exist $\gamma\geq 0$ such that 
its auxiliary ODE system \eqref{auxode} is exponentially
stable.
\end{lemma}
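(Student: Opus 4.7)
The plan is to embed every trajectory of the variational DAE system \eqref{vari} into the family of trajectories of the auxiliary ODE \eqref{auxode}, so that exponential decay of $(\xi_\gamma,\nu_\gamma)$ forces exponential decay of $(\xi,\nu)$. The construction of \eqref{auxode} is engineered precisely so that the algebraic residual of the variational DAE is an invariant of the auxiliary dynamics (up to exponential decay with rate $\gamma$), and this is the observation I would exploit.

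Concretely, first I would introduce the residual
\begin{equation*}
\sigma(t) := \tfrac{\partial g}{\partial w}(t,w(t),z(t))\,\xi_\gamma(t) + \tfrac{\partial g}{\partial z}(t,w(t),z(t))\,\nu_\gamma(t),
\end{equation*}
evaluated along any solution $(\xi_\gamma,\nu_\gamma)$ of \eqref{auxode}. Differentiating $\sigma$ in $t$, using $\dot\xi_\gamma=\tfrac{\partial f}{\partial w}\xi_\gamma+\tfrac{\partial f}{\partial z}\nu_\gamma$ together with the prescribed formula for $\dot\nu_\gamma$ and multiplying through by $\tfrac{\partial g}{\partial z}$, the contributions of $\tfrac{\mathrm{d}}{\mathrm{d}t}(\tfrac{\partial g}{\partial w})$, $\tfrac{\mathrm{d}}{\mathrm{d}t}(\tfrac{\partial g}{\partial z})$, $\tfrac{\partial g}{\partial w}\tfrac{\partial f}{\partial w}$ and $\tfrac{\partial g}{\partial w}\tfrac{\partial f}{\partial z}$ cancel exactly, and what remains is simply $\dot\sigma=-\gamma\sigma$. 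Hence $\sigma(t)=e^{-\gamma(t-t_0)}\sigma(t_0)$, so any solution starting on the algebraic manifold $\sigma(t_0)=0$ remains on it for all $t\geq t_0$.

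Second, I would take any solution $(\xi,\nu)$ of the variational DAE \eqref{vari} and show that the pair $(\xi_\gamma,\nu_\gamma):=(\xi,\nu)$ solves \eqref{auxode}. Because the index-1 assumption lets us solve the algebraic constraint for $\nu$, we may differentiate $\tfrac{\partial g}{\partial w}\xi+\tfrac{\partial g}{\partial z}\nu=0$ in $t$ and substitute $\dot\xi$ from \eqref{vari}, which yields exactly the expression on the right-hand side of the $\dot\nu_\gamma$-equation specialised to $\sigma=0$ (i.e.\ without the $\gamma$-term). Since the initial condition of \eqref{vari} automatically satisfies $\sigma(t_0)=0$, the $\gamma$-term contributes nothing and the DAE trajectory is literally a trajectory of \eqref{auxode}.

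Finally, assuming the auxiliary ODE \eqref{auxode} is UGES, every trajectory satisfies $\|(\xi_\gamma(t),\nu_\gamma(t))\|\leq ce^{-\alpha(t-t_0)}\|(\xi_\gamma(t_0),\nu_\gamma(t_0))\|$. Restricting to initial data compatible with the algebraic constraint of \eqref{vari} (equivalently, $\sigma(t_0)=0$), the embedding from the previous step yields the same bound for arbitrary solutions $(\xi,\nu)$ of \eqref{vari}, proving UGES of the variational DAE. The main technical hurdle is the bookkeeping in the cancellation showing $\dot\sigma=-\gamma\sigma$; everything else is structural. Note also that the constant $\gamma\geq 0$ only influences how the off-manifold residual decays and is irrelevant for trajectories already on the manifold, but its presence is crucial to allow tuning the auxiliary dynamics so that exponential stability of \eqref{auxode} is tractable.
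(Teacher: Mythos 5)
Your proposal is correct and follows essentially the same route as the paper: the paper likewise rewrites \eqref{auxode} in the equivalent ``residual'' form $\frac{\mathrm{d}}{\mathrm{d}t}\bigl(\tfrac{\partial g}{\partial w}\xi_\gamma+\tfrac{\partial g}{\partial z}\nu_\gamma\bigr)=-\gamma\bigl(\tfrac{\partial g}{\partial w}\xi_\gamma+\tfrac{\partial g}{\partial z}\nu_\gamma\bigr)$, deduces that the residual decays as $e^{-\gamma t}$, and concludes that solutions of \eqref{vari} are exactly the solutions of \eqref{auxode} with zero initial residual. Your write-up simply makes explicit the cancellation that the paper compresses into the word ``equivalent.''
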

\begin{proof}
Consider an auxiliary system of \eqref{vari} as follows, which is equivalent to \eqref{auxode},
\begin{equation}\label{auxode1}
 \left\{\begin{matrix}
\dot{\xi}_\gamma=\frac{\partial f}{\partial w}\xi_\gamma+\frac{\partial f}{\partial z}\nu_\gamma,\\ 
\dot{\overbrace{\Big(\tfrac{\partial g}{\partial w}\xi_\gamma+\frac{\partial g}{\partial z}\nu_\gamma\Big)}}=-\gamma\Big(\tfrac{\partial g}{\partial w}\xi_\gamma+\frac{\partial g}{\partial z}\nu_\gamma\Big).
\end{matrix}\right.   
\end{equation}
From the second equation in \eqref{auxode1}, we have
\begin{equation}\label{traj}
\Big(\tfrac{\partial g}{\partial w}\xi_\gamma+\frac{\partial g}{\partial z}\nu_\gamma\Big)=\Big(\frac{\partial g_0}{\partial w}\xi_{\gamma0}+\frac{\partial g_0}{\partial z}\nu_{\gamma0}\Big)e^{-\gamma t}. 
\end{equation}
where $\frac{\partial g_0}{\partial w}=\tfrac{\partial g}{\partial w}\Big|_{t=0}$.  It follows then that the solution of \eqref{vari} is a particular case of \eqref{auxode1} with initial condition $\frac{\partial g_0}{\partial w}\xi_{\gamma0}+\frac{\partial g_0}{\partial z}\nu_{\gamma0}=0$. 
\end{proof}

In Lemma \ref{lemma1}, we use the stability properties of \eqref{auxode1} to derive the stability properties of \eqref{vari}. In this regard, the constant $\gamma$ must be chosen properly 
as \eqref{auxode1} may fail to capture the stability properties of \eqref{vari}, e.g., the system \eqref{auxode1} can be unstable while correspondingly the system \eqref{vari} is stable. For instance, 
let us consider the following contracting DAE system whose trajectories can be calculated explicitly
\begin{equation}\label{exam3}
\left\{\begin{aligned}
\dot{w}&=-2e^{t}z,\\ 
0&=e^{-t}w-z,
\end{aligned}\right.   
\end{equation}
which again has the same form as its variational system. The trajectories of the system are given by  $w(t)=w_0e^{-2t}$ and $z(t)=w_0e^{-3t}$. If we choose $\gamma=1$, \eqref{auxode} can be rewritten as
\begin{equation}\label{exam30}
\left\{\begin{aligned}
\dot{w}&=-2e^{t}z,\\ 
\dot{z}&=-3z,
\end{aligned}\right.   
\end{equation} 
which has the solution $w(t)=z_0e^{-2t}+w_0-z_0$ and $z(t)=z_0e^{-3t}$ which does not converge to zero for $w_0\neq z_0$. However, by choosing $\gamma=4$, \eqref{auxode} can be rewritten as
\begin{equation}\label{exam31}
\left\{\begin{aligned}
\dot{w}&=-2e^tz,\\ 
\dot{z}&=3e^{-t}w-6z,
\end{aligned}\right.   
\end{equation} 
whose solution is given by $w(t)=(3w_0-2z_0)e^{-2t}-2(w_0-z_0)e^{-3t}$ and $z(t)=(3w_0-2z_0)e^{-3t}-3(w_0-z_0)e^{-4t}$, which is converging to zero. Based on this observation, it appears that in order to ensure the contractivity of the system, the parameter $\gamma$ should be chosen appropriately.

In the following proposition, 
we establish a lower bound for $\gamma$ based on a priori information of the DAE system. 

\begin{proposition}\label{proposition2}
Suppose that the reduced system \eqref{eq:re} is uniformly 
exponentially upper and lower bounded by $-\underline{\alpha}$, $-\overline{\alpha}$, with $\overline{\alpha}\geq \underline{\alpha}>0$, i.e.\ there exists constants $\underline{c},\overline{c}>0$ such that all solutions satisfy
\[
   \underline{c} e^{-\overline{\alpha} (t-t_0)} \left\|\xi(t_0)\right\| \leq \left\|\xi(t)\right\| \leq \overline{c} e^{-\underline{\alpha} (t-t_0)}  \left\|\xi(t_0)\right\|
\]
for all $t\geq t_0$. Furthermore, assume that there exist constants $l_f = l_f(w_0,z_0)$, $ k_f = k_f(w_0,z_0)$,  $l_g = l_g(w_0,z_0)$, $k_g=k_g(w_0,z_0)$ with $k_f,k_g>0$ such that, $\left\|\frac{\partial f}{\partial z}\right\|\leq k_fe^{l_ft}$ and $\left\|\left[\frac{\partial g}{\partial z}\right]^{-1}\right\|\leq k_ge^{l_gt}$ for all $t\geq 0$, where all gradients are evaluated on the trajectory $w(t)=\varphi(t,w_0,z_0)$ and $z(t)=\psi(t,w_0,z_0)$ of \eqref{daet}; let $l := \max\{l_g, l_g+l_f\}$. If the variational DAE \eqref{vari} is exponentially stable and $\gamma > l+\overline{\alpha}$ then also the corresponding ODE system \eqref{auxode} is exponentially stable (with a convergence rate at least as large as any convergence rate of the variational DAE). In particular, if for sufficiently large $\gamma$ the ODE \eqref{auxode} is unstable, the variational DAE cannot be exponentially stable and hence the original nonlinear DAE cannot be contractive. 
\end{proposition}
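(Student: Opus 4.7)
The strategy is to exploit the cascade structure revealed by the equivalent reformulation \eqref{auxode1}. Introducing $\eta := \tfrac{\partial g}{\partial w}\xi_\gamma + \tfrac{\partial g}{\partial z}\nu_\gamma$, the second equation of \eqref{auxode1} collapses to $\dot{\eta} = -\gamma\eta$, so that $\|\eta(t)\| = \|\eta(t_0)\| e^{-\gamma(t-t_0)}$ decays purely exponentially at the designer-chosen rate $\gamma$. Using the index-1 assumption, I would solve the algebraic relation for $\nu_\gamma = \left[\tfrac{\partial g}{\partial z}\right]^{-1}\bigl(\eta - \tfrac{\partial g}{\partial w}\xi_\gamma\bigr)$ and substitute into the first equation of \eqref{auxode1} to obtain the forced reduced ODE
\[
\dot{\xi}_\gamma = A_r(t)\xi_\gamma + B(t)\eta,
\]
where $A_r = \tfrac{\partial f}{\partial w} - \tfrac{\partial f}{\partial z}\left[\tfrac{\partial g}{\partial z}\right]^{-1}\tfrac{\partial g}{\partial w}$ is precisely the system matrix of the reduced variational system \eqref{eq:re} and $B(t) := \tfrac{\partial f}{\partial z}\left[\tfrac{\partial g}{\partial z}\right]^{-1}$ satisfies $\|B(t)\| \leq k_f k_g e^{(l_f+l_g)t}$.

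Next I would apply variation of parameters using the state transition $\Phi_r(t,s)$ of the reduced system:
\[
\xi_\gamma(t) = \Phi_r(t,t_0)\xi_\gamma(t_0) + \int_{t_0}^t \Phi_r(t,s)\, B(s)\, \eta(s)\,\dd s.
\]
The upper bound $\|\Phi_r(t,s)\|\leq \overline{c}\, e^{-\underline{\alpha}(t-s)}$ combined with the explicit exponential bound $\|B(s)\eta(s)\| \leq k_fk_g\|\eta(t_0)\|\, e^{(l_f+l_g)s-\gamma(s-t_0)}$ turns the convolution into an elementary integral that converges and exhibits exponential decay whenever $\gamma - (l_f+l_g) > \underline{\alpha}$. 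Since $\overline{\alpha}\geq \underline{\alpha}$ and $l\geq l_f+l_g$, the hypothesis $\gamma > l + \overline{\alpha}$ is more than enough to control the $\xi_\gamma$-component.

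For the $\nu_\gamma$-component I would return to $\nu_\gamma = \left[\tfrac{\partial g}{\partial z}\right]^{-1}\eta - \left[\tfrac{\partial g}{\partial z}\right]^{-1}\tfrac{\partial g}{\partial w}\,\xi_\gamma$. The first term is bounded directly by $k_ge^{l_g t}\|\eta(t)\|$ and decays exponentially as soon as $\gamma > l_g$. The second term is the subtle one: there is no a priori bound on $\tfrac{\partial g}{\partial w}$, but along any variational DAE trajectory one has $\left[\tfrac{\partial g}{\partial z}\right]^{-1}\tfrac{\partial g}{\partial w}\,\xi = -\nu$, so UGES of \eqref{vari} (which is available via Proposition~\ref{proposition1} applied to trajectories with $\eta\equiv 0$) provides decay of $\left[\tfrac{\partial g}{\partial z}\right]^{-1}\tfrac{\partial g}{\partial w}\,\xi$ along reduced trajectories. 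Comparing this decay with the lower exponential bound $\|\xi(t)\|\geq \underline{c}\, e^{-\overline{\alpha}(t-t_0)}\|\xi(t_0)\|$ yields that the effective growth rate of $\left[\tfrac{\partial g}{\partial z}\right]^{-1}\tfrac{\partial g}{\partial w}$ along such trajectories is at most $\overline{\alpha}$, which is precisely why the statement requires $\gamma > l+\overline{\alpha}$ and not merely $\gamma > l + \underline{\alpha}$. Plugging the resulting estimate into the second term of $\nu_\gamma$ then combines, via the triangle inequality, into an exponential bound on $\|(\xi_\gamma,\nu_\gamma)\|$ with rate at least that of the variational DAE.

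The main obstacle will be the clean bookkeeping of the three competing exponential scales (the $e^{l_f t}$, $e^{l_g t}$ growth of the Jacobians; the $e^{-\underline{\alpha} t}$/$e^{-\overline{\alpha} t}$ envelope of the reduced system; and the $e^{-\gamma t}$ decay of $\eta$), and in particular the justification of the second step above, where $\left[\tfrac{\partial g}{\partial z}\right]^{-1}\tfrac{\partial g}{\partial w}\,\xi_\gamma$ is controlled without any direct bound on $\tfrac{\partial g}{\partial w}$. The final sentence of the proposition follows immediately by contrapositive: if \eqref{auxode} is unstable for a $\gamma$ larger than the threshold, exponential stability of \eqref{vari} would contradict the implication just proved, and by Proposition~\ref{proposition1} \eqref{daet} cannot be contractive.
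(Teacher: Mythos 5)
Your overall architecture coincides with the paper's: the same cascade reformulation via $q=\tfrac{\partial g}{\partial w}\xi_\gamma+\tfrac{\partial g}{\partial z}\nu_\gamma$ with $\dot q=-\gamma q$, the same variation-of-parameters estimate for $\xi_\gamma$ (your treatment of that component and of the $\left[\tfrac{\partial g}{\partial z}\right]^{-1}q$ part of $\nu_\gamma$ is correct), and the same idea of using exponential stability of the variational DAE to control the otherwise unbounded operator $G(t):=\left[\tfrac{\partial g}{\partial z}\right]^{-1}\tfrac{\partial g}{\partial w}$. The gap is precisely in the step you flag as unresolved. As written, you propose to turn the decay $\|G(t)\Phi(t,t_0)\xi_0\|=\|\nu(t)\|\leq c'e^{-\alpha' t}\left\|\sbm{\xi_0\\\nu_0}\right\|$ together with the lower envelope of the reduced flow into a pointwise ``effective growth rate at most $\overline{\alpha}$'' for $G(t)$, and then multiply by $\|\xi_\gamma(t)\|\lesssim e^{-\underline{\alpha}t}$. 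That product behaves like $e^{(\overline{\alpha}-\alpha'-\underline{\alpha})t}$, whose exponent is positive whenever $\overline{\alpha}>\alpha'+\underline{\alpha}$ (e.g.\ $\overline{\alpha}=10$, $\underline{\alpha}=\alpha'=1$), so this route does not yield decay of $G(t)\xi_\gamma(t)$, and enlarging $\gamma$ cannot repair it since $\gamma$ no longer appears in that exponent.

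The correct assembly---and the one the paper uses---keeps the composition $G(t)\Phi(t,t_0)$ intact rather than splitting it into an operator bound times a state bound. Factor $\xi_\gamma(t)=\Phi(t,t_0)\bigl(\xi_0+\int_{t_0}^t\Phi(t_0,s)\tfrac{\partial f}{\partial z}\left[\tfrac{\partial g}{\partial z}\right]^{-1}e^{-\gamma s}q_0\,\dd s\bigr)$. The bracketed vector is bounded uniformly in $t$, because the lower envelope gives $\|\Phi(t_0,s)\|\leq \underline{c}^{-1}e^{\overline{\alpha}(s-t_0)}$ and hence the integrand is $O(e^{(\overline{\alpha}+l_f+l_g-\gamma)s})$ with strictly negative exponent---this, together with the bound $k_g\|q_0\|e^{(l_g-\gamma)t}\leq k_g\|q_0\|e^{-\overline{\alpha}t}$ on the first term of $\nu_\gamma$, is where $\gamma>l+\overline{\alpha}$ is genuinely needed. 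Then apply $\|G(t)\Phi(t,t_0)\|\leq c'e^{-\alpha' t}$, which follows from exponential stability of \eqref{vari} via the block-matrix inequality $\left\|\sbm{M&0\\N&0}\right\|\geq\|N\|$, directly to that uniformly bounded vector, giving $\|G(t)\xi_\gamma(t)\|\leq c'e^{-\alpha' t}(\|\xi_0\|+\hat c\|q_0\|)$ and the claimed overall rate $\alpha'$. All your ingredients are present; only this last combination needs reordering. (A minor point: exponential stability of \eqref{vari} is a hypothesis of the proposition, so it need not be obtained through Proposition~\ref{proposition1}.)
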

\begin{proof}
Let us define $q=\tfrac{\partial g}{\partial w}\xi_\gamma+\frac{\partial g}{\partial z}\nu_\gamma$, so that \eqref{auxode1} can be rewritten as
\begin{equation}\label{auxode10}
 \left\{\begin{matrix}
\dot{\xi}_{\gamma}=\Big(\frac{\partial f}{\partial w}-\frac{\partial f}{\partial z}\left[\frac{\partial g}{\partial z}\right]^{-1}\tfrac{\partial g}{\partial w}\Big)\xi_{\gamma}+\frac{\partial f}{\partial z}\left[\frac{\partial g}{\partial z}\right]^{-1}q,\\ 
\dot{q}=-\gamma q.
\end{matrix}\right.   
\end{equation}
In this case, the solution of \eqref{auxode10} is given by 
\begin{equation}\label{auxode11}
\left\{\begin{matrix}
\xi_{\gamma}=\Phi(t,t_0)\xi_0+q_0\int_{t_0}^{t}{\Phi(t,s)\Big(\frac{\partial f}{\partial z}\left[\frac{\partial g}{\partial z}\right]^{-1}e^{-\gamma s}\Big)\dd s},\\ 
q=q_0e^{-\gamma t},
\end{matrix}\right.   
\end{equation}
where $\Phi(t,s)$ is the transition matrix of the reduced system. By assumption, the transition matrix satisfies $\underline{c}e^{-\overline{\alpha}(t_1-t_0)}\leq||\Phi(t_1,t_0)||\leq \overline{c}e^{-\underline{\alpha}(t_1-t_0)}$, hence we arrive at the following estimation 
\begin{equation}\label{auxode12}
\begin{aligned}
\left\|\xi_{\gamma}(t)\right\| & \leq   \overline{c}e^{-\underline{\alpha} (t-t_0)}\left\|\xi_0\right\|\\
&\quad +\left\|q_0\right\| \int_{t_0}^{t}\overline{c} e^{-\underline{\alpha}(t-s)}\left\|\tfrac{\partial f}{\partial z}\right\| \left\|\left[\tfrac{\partial g}{\partial z}\right]^{-1}\right\|e^{-\gamma s}\dd s\\
&\leq \overline{c} e^{-\underline{\alpha} t}\left\|\xi_0\right\|\\
&\quad + \overline{c} e^{-\underline{\alpha} t}\left\|q_0\right\|\int_{t_0}^{t} 
k_f e^{l_f s} k_g e^{l_g s}
e^{-(\gamma-\underline{\alpha}) s}\dd s\\
& \leq \overline{c} e^{-\underline{\alpha} (t-t_0)}(\left\|\xi_0\right\|+k\left\|q_0\right\|).
\end{aligned}  
\end{equation}
The last inequality is derived 
from $l_f+l_g-\gamma+\underline{\alpha}\leq l -\gamma + \overline{\alpha} < 0$ and for some suitable $k>0$. 
This implies that the convergence rate of $\xi$ is at least $\underline{\alpha}$.
By the definition of $q$ defined before \eqref{auxode10}, we have $\nu_\gamma = -\left[\frac{\partial g}{\partial z}\right]^{-1}\left(e^{-\gamma t}q_0 - \tfrac{\partial g}{\partial w} \xi_\gamma\right)$ and hence
\begin{equation}\label{auxode13}
\begin{aligned}
&\left\|\nu_{\gamma}\right\|\leq \left\|q_0\right\|\left\|\left[\tfrac{\partial g}{\partial z}\right]^{-1}\right\|e^{-\gamma t}+\left\|\left[\tfrac{\partial g}{\partial z}\right]^{-1}\tfrac{\partial g}{\partial w}\xi_{\gamma}\right\|
\end{aligned}  
\end{equation}
By assumption, $l_g - \gamma < - \overline{\alpha}$, and hence
\[
 \left\|q_0\right\|\left\|\left[\tfrac{\partial g}{\partial z}\right]^{-1}\right\|e^{-\gamma t} \leq k_g \|q_0\| e^{-\overline{\alpha} t}.
\]
We furthermore have, utilizing \eqref{auxode11} and $\Phi(t,s) = \Phi(t,t_0)\Phi(t_0,s)$,
\begin{equation}\label{auxode14}
\begin{aligned}
&\left\|\left[\tfrac{\partial g}{\partial z}\right]^{-1}\tfrac{\partial g}{\partial w}\xi_{\gamma}\right\| \leq \left\|\left[\tfrac{\partial g}{\partial z}\right]^{-1}\tfrac{\partial g}{\partial w}\Phi(t,t_0)\right\| \Bigg ( \|\xi_0\|+\\
&
\|q_0\| \int_{t_0}^t \left\|\Phi(t_0,s)\tfrac{\partial f}{\partial z}\left[\tfrac{\partial g}{\partial z}\right]^{-1}e^{-\gamma s}\right\|\dd s \Bigg)
\end{aligned}  
\end{equation}
To further bound the terms in \eqref{auxode14}, we first observe that the solution of \eqref{vari} satisfies 
\begin{equation}\nonumber
\begin{aligned}
\begin{pmatrix}\xi(t)\\\nu(t)\end{pmatrix} &= \begin{bmatrix} \Phi(t,t_0) & 0 \\ \left[\tfrac{\partial g}{\partial z}\right]^{-1}\tfrac{\partial g}{\partial w}\Phi(t,t_0)&0\end{bmatrix} \begin{pmatrix} \xi_0\\ \nu_0\end{pmatrix}
\end{aligned}  
\end{equation}
hence exponential stability of \eqref{vari} ensures
\begin{equation}\nonumber
\begin{aligned}
\left\|\bbm{\Phi(t,t_0)&0\\\left[\tfrac{\partial g}{\partial z}\right]^{-1}\tfrac{\partial g}{\partial w}\Phi(t,t_0)&0}\right\|\leq c'e^{-\alpha' t},
\end{aligned}  
\end{equation}
for some positive $c'$ and $\alpha'\leq \underline{\alpha}$. It follows \footnote{In general, for any induced matrix norm by a $p$-norm we have 
 $\left\|\bbm{M & 0 \\N & 0 }\right\|\geq \left\|N\right\|$.}  that 
\begin{equation*}
\begin{aligned}
\left\|\left[\tfrac{\partial g}{\partial z}\right]^{-1}\tfrac{\partial g}{\partial w}\Phi(t,t_0)\right\|\leq c'e^{-\alpha't},
\end{aligned}  
\end{equation*}
To bound the integral term in \eqref{auxode14} we observe that
\[
   \|\Phi(t_0,s)\| \leq \|\Phi(s,t_0)\|^{-1} \leq \underline{c} e^{\overline{\alpha}(s-t_0)}
\]
hence
\[\begin{aligned}
   \left\|\Phi(t_0,s)\tfrac{\partial f}{\partial z}\left[\tfrac{\partial g}{\partial z}\right]^{-1}e^{-\gamma s}\right\|&\leq \underline{c} e^{\overline{\alpha}(s-t_0)} k_f e^{l_f s} k_g e^{l_g s} e^{-\gamma s}\\
   &\leq \tilde{c} e^{(\overline{\alpha}+l_f+l_g - \gamma)s} = \tilde{c} e^{-\tilde{\alpha} s}
   \end{aligned}
\]
for some $\tilde{c}>0$ and $\tilde{\alpha}:=-(\overline{\alpha}+l_f+l_g - \gamma) > 0$. Consequently, the integral term in \eqref{auxode14} is bounded by, say, $\hat{c}>0$  (independent of $t$). Altogether, we can bound $\nu_\gamma$ by
\[
  \|\nu_\gamma\| \leq k_g \|q_0\| e^{-\overline{\alpha} t} + c' e^{-\alpha' t}(\|\xi_0\| + \hat{c} \|q_0\|).
\]

Together with \eqref{auxode12} this shows that the system \eqref{auxode} is asymptotically stable. 
Note that $-\overline{\alpha} < -\underline{\alpha} \leq -\alpha'$, hence the overall exponential convergence rate of \eqref{auxode12} is at least $\alpha'$ which was the assumed convergence rate of \eqref{vari}.
\end{proof} 


In Proposition \ref{proposition2}, we have a mild assumption where the time varying functions $\frac{\partial f}{\partial z}$ and $\left[\frac{\partial g}{\partial z}\right]^{-1}$ are bounded by $ke^{lt}$ instead of by constants. 
Let us recall again the previous example in \eqref{exam3}, where the reduced system is $\dot{w}=-2w$, we have $\left\|\frac{\partial f}{\partial z}\right\|=2e^{t}$, $\left\|\left[\frac{\partial g}{\partial z}\right]^{-1}\right\|=1$. Thus the hypotheses in  Proposition \ref{proposition2} hold with $\alpha =2$, $k=l=1$ and $\gamma=4> l+\overline{\alpha}=3$. Then, the ODE system \eqref{exam31} has the same exponential convergence rate as the corresponding DAE system \eqref{exam30}, namely $\alpha' = 2$. By considering Proposition \ref{proposition2}, it becomes evident that selecting $\gamma > l + \overline{\alpha}$ allows us to conclude that if \eqref{auxode} is unstable, then \eqref{vari} is also unstable. This proposition helps prevent any potential ``misjudgment" similar to the previous example \eqref{exam30}. By examining \eqref{auxode13}, it shows 
that as $\gamma$ increases, the estimation performance of \eqref{auxode1} w.r.t. \eqref{vari} improves. 
If obtaining prior information about the DAE system \eqref{eq:d1} is difficult, one can select a sufficiently large value for $\gamma$ to apply Proposition \ref{proposition2} and determine the contracting rate of the DAE system.

By utilizing the concept of matrix measure, we will examine the contractivity of the time-varying DAE system \eqref{daet} by focusing on the generalized Jacobian matrix of the corresponding auxiliary ODE system \eqref{auxode}. In order to simplify the notation, we rewrite \eqref{auxode} into 
\begin{equation}\label{auxode0}
\begin{aligned}
 \left\{\begin{matrix}
\dot{\xi }_\gamma=A(t)\xi_\gamma +B(t)\nu_\gamma,\\ 
\dot{\nu}_\gamma=-F^{-1}(t)C(t)\xi_\gamma-F^{-1}(t)D(t)\nu_\gamma,
\end{matrix}\right.
\end{aligned}   
\end{equation}
where $A(t)=\frac{\partial f}{\partial w}$, $B(t)=\frac{\partial f}{\partial z}$, $F(t)=\frac{\partial g}{\partial z}$, $C(t)=\gamma\tfrac{\partial g}{\partial w}+\frac{\dd}{\dd t}(\tfrac{\partial g}{\partial w})+\tfrac{\partial g}{\partial w}\frac{\partial f}{\partial w}$, and $D(t)=\gamma\frac{\partial g}{\partial z}+\frac{\dd}{\dd t}(\frac{\partial g}{\partial z})+\tfrac{\partial g}{\partial w}\frac{\partial f}{\partial z}$. Furthermore, for any sufficiently smooth invertible matrix  $M(w,z,t)$ we define the matrix 
\begin{equation}\label{j1}
J_M(t):=\dot{M}M^{-1}+M\bbm{
A & B\\ 
-F^{-1}C & -F^{-1}D
}M^{-1},
\end{equation}
where $M(w(t),z(t),t)$ is evaluated along a solution $(w,z)$ of \eqref{daet} and $\dot{M}(w(t),z(t),t)=\tfrac{\mathrm{d}}{\mathrm{d}t} M(w(t),z(t),t)$.

\begin{theorem}\label{theorem1}
If there exist an invertible matrix $M(w,z,t)$ and $\beta>0$ such that for all solutions $(w,z)$ of \eqref{daet} it holds that $\left\|M^{-1}\right\|\left\| M \right\|$ is bounded and $\mu_q\Big(J_M(t)\Big)\leq -\beta$, where $J_M(t)$ is given by \eqref{j1}, then the DAE system \eqref{daet} is contracting.
\end{theorem}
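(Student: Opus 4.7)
The plan is to chain together the three main results established earlier in the paper. In order: (a) use the matrix-measure hypothesis to show exponential decay of solutions of the auxiliary ODE \eqref{auxode} after a time-dependent change of coordinates; (b) invoke Lemma \ref{lemma1} to transfer this decay to UGES of the variational DAE \eqref{vari}; and (c) apply Proposition \ref{proposition1} to conclude that \eqref{daet} is contracting.

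For step (a), I would fix an arbitrary solution $(w,z)$ of \eqref{daet} and write the auxiliary ODE \eqref{auxode0} compactly as $\dot\Xi_\gamma = K(t)\Xi_\gamma$ with $\Xi_\gamma := \sbm{\xi_\gamma \\ \nu_\gamma}$ and
\[
K(t) := \bbm{A(t) & B(t) \\ -F^{-1}(t)C(t) & -F^{-1}(t)D(t)}.
\]
Introduce $\eta(t) := M(w(t),z(t),t)\,\Xi_\gamma(t)$; a direct computation gives
\[
\dot\eta = \dot M\,\Xi_\gamma + M K\,\Xi_\gamma = \bigl(\dot M M^{-1} + M K M^{-1}\bigr)\eta = J_M(t)\,\eta,
\]
where $\dot M$ denotes the total derivative of $M(w(t),z(t),t)$ along the chosen trajectory, matching the definition in \eqref{j1}. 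The standard matrix-measure differential inequality then yields $D^{+}\|\eta(t)\| \leq \mu_q(J_M(t))\|\eta(t)\| \leq -\beta\|\eta(t)\|$, whence $\|\eta(t)\| \leq e^{-\beta(t-t_0)}\|\eta(t_0)\|$. Converting back, since $\Xi_\gamma = M^{-1}\eta$, one obtains
\[
\|\Xi_\gamma(t)\| \leq \|M^{-1}(t)\|\,\|M(t_0)\|\,e^{-\beta(t-t_0)}\,\|\Xi_\gamma(t_0)\|,
\]
and the boundedness of $\|M^{-1}\|\|M\|$ along every solution of \eqref{daet} produces a uniform constant, so that the auxiliary ODE \eqref{auxode} (for the $\gamma$ implicit in $J_M$) is UGES with rate $\beta$. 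Lemma \ref{lemma1} gives UGES of \eqref{vari}, and Proposition \ref{proposition1} completes the argument.

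I expect the proof to be largely a bookkeeping exercise rather than presenting a deep obstacle; the most delicate points are (i) keeping straight that $\dot M$ in \eqref{j1} is evaluated along a solution of \eqref{daet} (so that $\dot z$ is determined implicitly via the index-1 assumption), and (ii) noting that the bound $\|\eta(t)\| \leq e^{-\beta(t-t_0)}\|\eta(t_0)\|$ is with respect to the specific $p$-norm $\|\cdot\|_q$ that induces $\mu_q$, while Definitions \ref{d1}–\ref{d2} allow any $p$-norm; however, as remarked after Definition \ref{d2}, norm equivalence on $\reals^{n+m}$ preserves the exponential rate and only alters the constant, so this causes no real loss.
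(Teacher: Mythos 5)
Your proposal is correct and follows essentially the same route as the paper's proof: the change of coordinates $\eta = M\,\sbm{\xi_\gamma\\\nu_\gamma}$ yielding $\dot\eta = J_M(t)\eta$, Coppel's inequality (the matrix-measure differential inequality) for the exponential decay of $\eta$, the boundedness of $\|M^{-1}\|\|M\|$ to return to the original coordinates, and then Lemma \ref{lemma1} followed by Proposition \ref{proposition1}. The only cosmetic difference is that you write out the evaluation times in the bound $\|M^{-1}(t)\|\,\|M(t_0)\|$, a point the paper's own proof glosses over in exactly the same way.
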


\begin{proof}
For any $M(w,z,t)$, 
we consider the coordinate transformations $\sbm{p\\r}=M\sbm{\xi_\gamma\\\nu_\gamma}$, so that \eqref{auxode0} can be rewritten as
\begin{equation}\label{auxode01}
\begin{aligned}
 \dot{\overbrace{\bbm{
p\\ 
r
}}}=\underset{J_M(t)}{\underbrace{\Big(\dot{M}M^{-1}+M\bbm{
A & B\\ 
-F^{-1}C & -F^{-1}D
}M^{-1}\Big)}}\bbm{
p\\ 
r
}.
\end{aligned}   
\end{equation}
Appealing to Coppel’s inequality \cite{Vidyasagar}, we can conclude that
\begin{equation}\label{dini2}
\begin{aligned}
\left\|\bbm{
p(t)\\ 
r(t)
}\right\|\leq e^{-\beta t}\left\|\bbm{
p(0)\\ 
r(0)
}\right\|,
\end{aligned}
\end{equation}
from which it follows that 
\begin{equation}\label{dini21}
\begin{aligned}
&\left\|\bbm{
\xi_\gamma(t)\\ 
\nu_\gamma(t)
}\right\|=\left\|M ^{-1}
\bbm{
p(t)\\ 
r(t)
}\right\|
\overset{\eqref{dini2}}{\leq} e^{-\beta t}\left\|M ^{-1}
\right\|\left\|M\right\|\left\|\bbm{
\xi_\gamma(0)\\ 
\nu_\gamma(0)
}\right\|\\&\leq ce^{-\beta t}\left\|\bbm{
\xi_\gamma(0)\\ 
\nu_\gamma(0)
}\right\|,
\end{aligned}
\end{equation}
i.e., the system \eqref{auxode0} is exponentially stable, the last inequality is due to $\left\|M^{-1}\right\|\left\|M\right\|$ is bounded. By Lemma \ref{lemma1} and Proposition \ref{proposition1}, we can conclude that the DAE system \eqref{daet} is contracting.\end{proof}

In Theorem \ref{theorem1}, the condition that $\left\|M^{-1}\right\|\left\|M\right\|$ is bounded is less conservative compared to the usual assumption of being a Lyapunov transformation \cite[Def.~2.4]{dieci2002lyapunov}. Specifically, instead of requiring both $\left\|M^{-1}\right\|$ and $\left\|M\right\|$ to be bounded, it only demands their product to be bounded. As illustrated in Example \ref{ex10}, this less conservative condition allows for cases where $M=\sbm{
e^{-t}&0&0\\
0&e^{-t}&0\\
0&0&e^{-t}
}$, which has an unbounded inverse. Nevertheless, it is trivial to see that $\left\|\sbm{
e^{t}&0&0\\
0&e^{t}&0\\
0&0&e^{t}}\right\|\left\|\sbm{e^{-t}&0&0\\
0&e^{-t}&0\\
0&0&e^{-t}}\right\|=1$. 

\begin{remark}\label{remark2}
As a powerful tool to analyze the stability of ODE systems, the Lyapunov method has been widely used in the analysis and control design of nonlinear systems. 
By employing the 2-norm matrix measure, the 
hypotheses in Theorem \ref{theorem1}, i.e., $\mu_2\Big(J_M(t)\Big)\leq-\beta$ can be  
reformulated into solvability of the following differential Riccati inequality  
\begin{equation}\label{ric}
\begin{aligned}
&\bbm{
A^{\top}(t) & C^{\top}(t)F^{-\top}(t)\\ 
B^{\top}(t) & D^{\top}(t)F^{-\top}(t)
}P(t)+\\&P(t)\bbm{
A(t) &  B(t)\\
F^{-1}(t)C(t) & F^{-1}(t)D(t)
}+\dot{P}(t)\leq -\beta P(t),
\end{aligned}
\end{equation}
where $P(t)=M^{\top}M$ for some $M(w,z,t)$. Afterwards, Theorem \ref{theorem1} can be simplified to \cite[Thm.~2]{Lohmiller1998}. As will be shown later in Example \ref{ex10}, 
such Riccati inequality will be used for the design of an observer. 
Nevertheless, in certain scenarios, it can be challenging to find a solution $M^{\top}M$ that satisfies \eqref{ric}. By employing the 1-norm ($\infty$-norm) as matrix measure, we can arrive at a simple numerical test to each column (or row) 
of the matrix $J_M(t)$. We demonstrate this approach via Example \ref{ex1} and Example \ref{ex2} later.
\end{remark}


\section{Simulation Setup and Applications}
This section presents three numerical examples to illustrate different applications of the proposed methods. In the first example, we examine the contraction property of a nonlinear time-varying DAE system using Theorem \ref{theorem1}. The second example illustrates the practical application of Proposition \ref{proposition1} in stabilizing a time-invariant DAE system, as demonstrated by Corollary \ref{corollary1}. Subsequently, we employ Corollary \ref{corollary1} to design a state feedback controller for a power source system with an inverter interface, which can be modeled as a time-invariant DAE system. Finally, the third illustration involves the application of Theorem \ref{theorem1} to devise an observer for a time-varying ODE system, namely, Corollary \ref{corollary0}. Subsequently, we employ Corollary \ref{corollary0} to design an observer for an unstable time-varying ODE system.

\subsection{Contractivity of nonlinear time-varying DAE systems}
\begin{example}\label{ex1}
Consider a nonlinear time-varying DAE system
\begin{equation}\label{smex1}
\begin{aligned}
\left\{\begin{matrix}
\dot{w}_1=-4w_1-0.5\cos z,\\ 
\dot{w}_2=\frac{4}{3+\sin t}w_1-\frac{3+\cos t}{3+\sin t}w_2-\frac{4}{3+\sin t},\\
0=4z+0.5\sin z+w_1+(3+\sin t)w_2.
\end{matrix}\right.
\end{aligned}
\end{equation}
where $w(t)\in\mathbb{R}^{2}$ is the state vector and $z(t)\in\mathbb{R}$ refers to the
algebraic vector. Its variational system is 
\begin{equation}\label{smex10}
\begin{aligned}
\left\{\begin{smallmatrix}
\dot{\xi}_1=-4\xi_1+(0.5\sin z) \nu,\\ 
\dot{\xi}_2=\frac{4}{3+\sin t}\xi_1-\frac{3+\cos t}{3+\sin t}\xi_2,\\ 
0=\xi_1+(3+\sin t)\xi_2+(4+0.5\cos z)\nu.
\end{smallmatrix}\right.
\end{aligned}
\end{equation}
In the given example \eqref{exam3}, obtaining information about the system is a straightforward process. However, in this specific case, the existence of a time-varying nonlinear constraint creates difficulties in obtaining information about \eqref{smex10}, we are unable to utilize Proposition \ref{proposition2} to select $\gamma$. Nevertheless, the simplest and most direct option for $\gamma$ in this instance is $\gamma=0$ (if this choice proves ineffective, a larger value of $\gamma$ should be considered). As a result, the auxiliary ODE system corresponding to \eqref{smex10} can be expressed as follows:
$\left\{\begin{smallmatrix}
\dot{\xi}_{\gamma1}=-4\xi_{\gamma1}+(0.5\sin z) \nu_{\gamma},\\ 
\dot{\xi}_{\gamma2}=\frac{4}{3+\sin t}\xi_{\gamma1}-\frac{3+\cos t}{3+\sin t}\xi_{\gamma2},\\ 
\dot{\nu}_{\gamma}=-\frac{4}{3+0.5\cos z}\nu_{\gamma}.
\end{smallmatrix}\right.$
It follows that with $M=I$ we have
$J_M(t)=\sbm{
-4 & 0 &0.5\sin z \\ 
\frac{4}{3+\sin t} &-\frac{3+\cos t}{3+\sin t}& 0\\ 
0 & 0 &-\frac{4}{3+0.5\cos z}
}$.
Applying the 1-norm, we find that $\mu_1\Big(J_M(t)\Big)<-0.5$ holds, which implies that the DAE system \eqref{smex1} is contracting according to Theorem \ref{theorem1}. Hence, we can conclude that the time-varying DAE system is contracting. 
Figure \ref{fig:0} illustrates the trajectories of the system with two distinct initial conditions. 

\begingroup
\begin{figure}[!htb]
 \centering
 \includegraphics[width=1in]{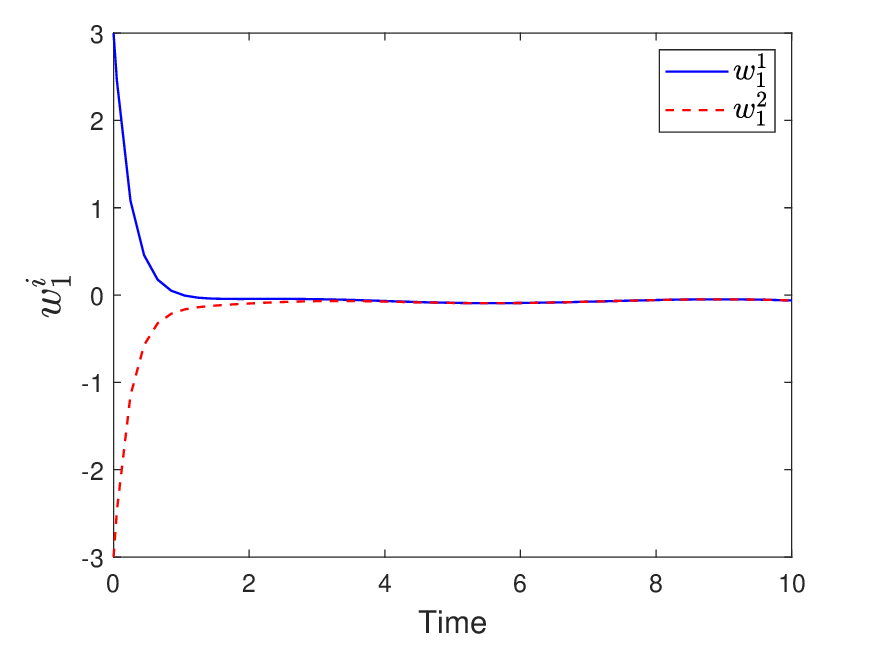}
 \includegraphics[width=1in]{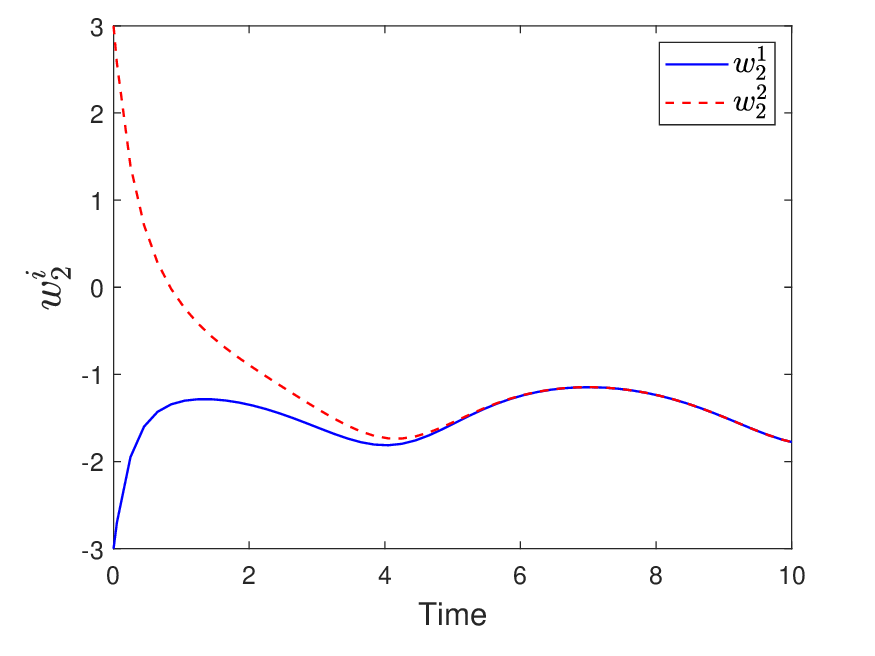}
  \includegraphics[width=1in]{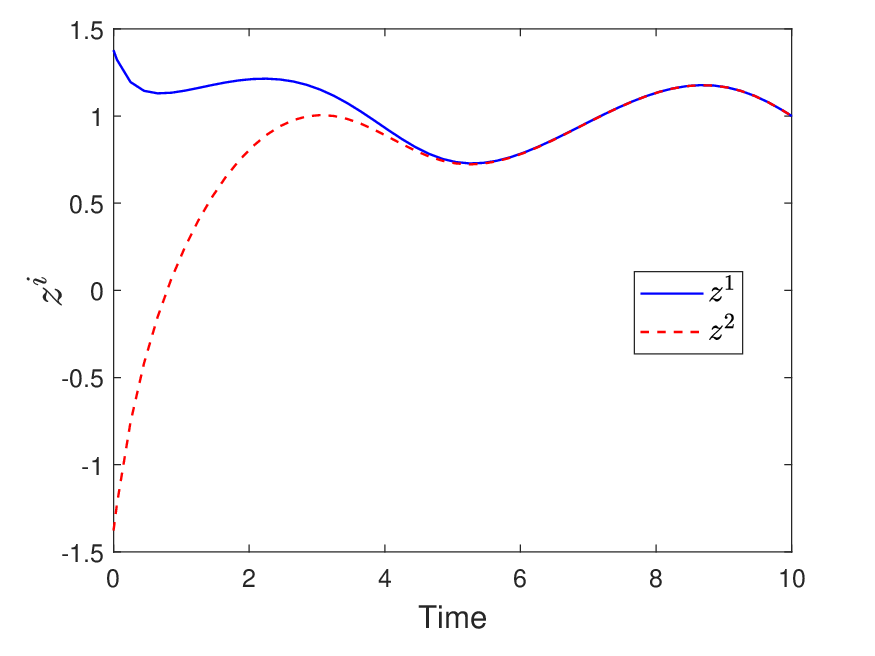}
  \caption{The plot of trajectories of time-varying DAE system in Example \ref{ex1}
initialized at $\sbm{3\\-3\\1.38}$ and  $\sbm{-3\\3\\-1.38}$.} \label{fig:0}
\end{figure}
\endgroup
\end{example}

\subsection{Stability of inverter-interfaced power source systems}
It is well known that for a contracting time-invariant system, all trajectories converge to an equilibrium exponentially. As an interesting particular case of our main results above, we can prove stability of the time-invariant DAE systems by using Proposition \ref{proposition1}. The basic idea involves guaranteeing the contravtivity of the DAE system while ensuring that its equilibrium point resides within its trajectory set. Consider a time-invariant DAE system given by
\begin{equation}\label{dae}
 \left\{\begin{matrix}
\dot{w}=f(w,z),\\ 
0=g(w,z),
\end{matrix}\right.   
\end{equation}
where $f(0,0)=0$, and $g(0,0)=0$. The variational system of \eqref{dae} is
\begin{equation}\label{variti}
 \left\{\begin{matrix}
\dot{\xi }=\frac{\partial f}{\partial w}(w(t),z(t))\cdot \xi +\frac{\partial f}{\partial z}(w(t),z(t))\cdot\nu,\\ 
0=\tfrac{\partial g}{\partial w}(w(t),z(t))\cdot\xi +\frac{\partial g}{\partial z}(w(t),z(t))\cdot\nu.
\end{matrix}\right.   
\end{equation}
To ensure safety in engineering systems, it is imperative to establish bounds on the states of \eqref{dae}. In the context of a power system, for example, the terminal complex voltage $V\angle \theta $ is restricted within the range $[0.95, 1.05]\times [-\frac{\pi}{2}, \frac{\pi}{2}]$ to ensure operational stability and prevent undesirable conditions. In this section, we investigate the time-invariant DAE system \eqref{dae} within the set $W\times Z$, i.e., $w(t)\in W\subseteq\mathbb{R}^{n}$ and $z(t)\in Z\subseteq\mathbb{R}^{m}$. Moreover, we assume that $\left[\frac{\partial g}{\partial z}\right]^{-1}\tfrac{\partial g}{\partial w}$ is bounded on $W\times Z$. In the subsequent corollary, we analyze the local exponential stability of \eqref{variti} through its reduced system $\dot{\xi}=\Big(\frac{\partial f}{\partial w}-\frac{\partial f}{\partial z}\left[\frac{\partial g}{\partial z}\right]^{-1}\tfrac{\partial g}{\partial w}\Big)\xi$.  


\begin{corollary}\label{corollary1}
The time-invariant DAE system \eqref{dae} is locally exponentially stable in $W\times Z$ if there exist an invertible matrix $M(w,z)$, such that $\mu_q\Big(J_M(w,z)\Big)\leq -\beta$ for some positive constant $\beta$, where $J(w,z)$ is given by 
\begin{equation}\label{j2}
\begin{aligned}
J_M(w,z)=\dot{M}M^{-1}+M \Big(\frac{\partial f}{\partial w}-\frac{\partial f}{\partial z}\left[\frac{\partial g}{\partial z}\right]^{-1}\tfrac{\partial g}{\partial w}\Big) M^{-1}.
\end{aligned}
\end{equation}
\end{corollary}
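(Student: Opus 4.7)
The plan is to bypass the auxiliary ODE construction of Section~\ref{sectionode} and instead apply the classical ODE contraction argument directly to the reduced variational system. In the time-invariant setting on $W\times Z$ the boundedness of $\left[\tfrac{\partial g}{\partial z}\right]^{-1}\tfrac{\partial g}{\partial w}$ (assumed at the beginning of this subsection) removes the obstruction that motivated introducing the auxiliary ODE in the first place. The end goal is to establish UGES of \eqref{variti} and then invoke Proposition~\ref{proposition1} to obtain contraction of \eqref{dae}; since $f(0,0)=0$ and $g(0,0)=0$ imply that the origin is an equilibrium lying in $W\times Z$, contraction will force all nearby trajectories to converge to it exponentially, which is exactly local exponential stability.

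The core calculation mirrors the coordinate change used in the proof of Theorem~\ref{theorem1}. I would work with the reduced variational equation $\dot{\xi}=J(w,z)\,\xi$ where $J(w,z):=\frac{\partial f}{\partial w}-\frac{\partial f}{\partial z}\left[\frac{\partial g}{\partial z}\right]^{-1}\tfrac{\partial g}{\partial w}$, evaluated along a solution $(w(t),z(t))$ of \eqref{dae}. The transformation $p = M(w,z)\,\xi$ produces $\dot{p}=J_M(w,z)\,p$ with $J_M$ exactly as in \eqref{j2}. Coppel's inequality together with $\mu_q(J_M)\leq-\beta$ then yields $\|p(t)\|\leq e^{-\beta(t-t_0)}\|p(t_0)\|$, and inverting the coordinate change, combined with continuity of $M$ and $M^{-1}$ on a compact subset of $W\times Z$, gives $\|\xi(t)\|\leq c\,e^{-\beta(t-t_0)}\|\xi(t_0)\|$ for some $c\geq 1$.

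I would then recover $\nu$ via the algebraic equation of \eqref{variti}: solving the second equation for $\nu$ gives $\nu=-\left[\tfrac{\partial g}{\partial z}\right]^{-1}\tfrac{\partial g}{\partial w}\,\xi$, and the standing boundedness hypothesis on $\left[\tfrac{\partial g}{\partial z}\right]^{-1}\tfrac{\partial g}{\partial w}$ on $W\times Z$ produces $\|\nu(t)\|\leq K\|\xi(t)\|$ for some $K>0$. Combining this with the bound on $\xi$ and using the standard $p$-norm comparison $\|\xi(t_0)\|\leq \|(\xi(t_0),\nu(t_0))\|$, one obtains $\|(\xi(t),\nu(t))\|\leq c'\,e^{-\beta(t-t_0)}\|(\xi(t_0),\nu(t_0))\|$, which is UGES of \eqref{variti}. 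Proposition~\ref{proposition1} then delivers contraction of \eqref{dae} on the region considered, and since the origin is itself a trajectory of \eqref{dae}, local exponential stability of the equilibrium follows.

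I expect the main obstacle to be the local bookkeeping: the estimates above require trajectories to remain inside $W\times Z$ and require $\|M\|\,\|M^{-1}\|$ to be bounded along them. I would handle this by restricting to a sufficiently small forward-invariant neighborhood of the origin contained in $W\times Z$, whose existence is guaranteed by the contraction estimate itself once it is established on such a neighborhood, making the argument self-consistent. A minor technical point is that Coppel's inequality is stated for time-varying linear ODEs, while here the time dependence of $J_M$ enters through the state $(w(t),z(t))$; this is harmless because fixing a reference trajectory turns $J_M$ into an explicit function of $t$.
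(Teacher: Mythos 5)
Your proposal matches the paper's own proof essentially step for step: the paper likewise applies the coordinate change $p=M\xi$ and the Coppel/matrix-measure argument from Theorem~\ref{theorem1} to the reduced variational system, recovers $\nu=-\left[\tfrac{\partial g}{\partial z}\right]^{-1}\tfrac{\partial g}{\partial w}\xi$ using the standing boundedness assumption on $W\times Z$, concludes UGES of \eqref{variti}, invokes Proposition~\ref{proposition1} for contraction, and finishes by noting that the origin is an admissible trajectory. Your extra care about forward invariance and the boundedness of $\|M\|\,\|M^{-1}\|$ along trajectories is a reasonable refinement of details the paper leaves implicit, but the route is the same.
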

\begin{proof}
If condition \eqref{j2} is satisfied, then using similar arguments as in the proof of Theorem~\ref{theorem1} we can conclude that the reduced system is exponentially stable, i.e., $\left\|\xi\right\|\leq c\left\|\xi_0\right\|e^{-\alpha t}$ for some $\alpha,c>0$. By using $\nu = \left[\frac{\partial g}{\partial z}\right]^{-1}\tfrac{\partial g}{\partial w} \xi$ and the boundedness of $\left[\frac{\partial g}{\partial z}\right]^{-1}\tfrac{\partial g}{\partial w}$ on $W\times Z$, we can conclude $\left\|\sbm{\xi\\\nu}\right\|\leq c'\left\|\xi_0\right\|e^{-\alpha t}\leq c'\left\|\sbm{\xi_0\\\nu_0}\right\|e^{-\alpha t}$ for some $c'>0$. Consequently, the variational system \eqref{variti} is exponentially stable. Applying Proposition \ref{proposition1}, we can conclude that the DAE system \eqref{dae} is contracting. As $\sbm{w(t)\\z(t)} = \sbm{0\\0}$ represents one of admissible
trajectories of \eqref{dae} and it is contracting, it follows that all the trajectories will exponentially converge to $\sbm{w(t)\\z(t)} = \sbm{0\\0}$.
\end{proof}
 \begin{example}\label{ex2} Consider an inverter-interfaced power source
connected to the infinite bus via a transmission line \cite{Yang2021}. 
The system dynamics is given by the following
\begin{equation}\label{smex2}
\begin{aligned}
\left\{\begin{smallmatrix}
\dot{P}=\frac{1}{\tau_1}\Big(-P+P^{\text{ref}}-d_1(\theta-\theta^{\text{ref}})\Big),\\ 
\dot{Q}=\frac{1}{\tau_2}\Big(-Q+Q^{\text{ref}}-d_2(V-V^{\text{ref}})\Big)+u.
\end{smallmatrix}\right.
\end{aligned}
\end{equation}
where $P$, $Q$, and $u$ are the terminal output active, reactive power, and control input, respectively. The variable $V\angle \theta$ is the terminal complex voltage, $1\angle 0$ is the desired complex voltage. The symbols 
$P^{\text{ref}}$, $Q^{\text{ref}}$ , $\theta^{\text{ref}}$, and $V^{\text{ref}}$ are pre-specified constant reference values. The constants $\tau_1>0$ and $\tau_2>0$ are time constants while $d_1>0$ and $d_2>0$ are droop coefficients. For this particular case, the algebraic equations are given by
$\left\{\begin{smallmatrix}
P-GV\cos\theta-BV\sin\theta=0,  \\ 
Q-GV\sin\theta+BV\cos\theta=0.
\end{smallmatrix}\right.$. 
In this example, 
a state feedback controller $u=k_1P+k_2Q$ will be designed based on Corollary \ref{corollary1}. 
For numerical purposes, the parameters are given by $\tau_1=\tau_2=\frac{1}{3}$, $d_1=d_2=\frac{1}{3}$, $P^{\text{ref}}=1$, $Q^{\text{ref}}=-1$, $\theta^{\text{ref}}=0$, $V^{\text{ref}}=1$, $G=B=1$. In this case, the whole system can be rewritten as  
$\left\{\begin{smallmatrix}
\dot{P}=-3P-\theta+3,\\ 
\dot{Q}=-3Q-V-2+k_1P+k_2Q,
\end{smallmatrix}\right.$,
$\left\{\begin{smallmatrix}
P-V\cos\theta-V\sin\theta=0,  \\ 
Q-V\sin\theta+V\cos\theta=0,
\end{smallmatrix}\right.$,
where $W(t)=\sbm{
P\\ 
Q
}$, $Z(t)=\sbm{
\theta \\ 
V
}$. Its corresponding variational system is
$\left\{\begin{smallmatrix}
\delta \dot{P}=-3\delta P-\delta \theta,\\ 
\delta \dot{Q}=-3\delta Q-\delta V+k_1\delta P+k_2\delta Q,\\ 
\delta P-(\sin\theta +\cos \theta)\delta V-V(\cos \theta-\sin\theta)\delta \theta =0,\\ 
\delta Q-(\sin\theta -\cos \theta)\delta V-V(\cos \theta+\sin\theta)\delta \theta =0.
\end{smallmatrix}\right.$
The reduced system 
is
$\left\{\begin{smallmatrix}
\delta \dot{P}=(-3+\frac{\sin \theta -\cos \theta}{2V})\delta P-\frac{\sin \theta +\cos \theta}{2V}\delta \theta,\\ 
\delta \dot{Q}=(k_1-\frac{\sin \theta +\cos \theta}{2})\delta P+(-3+k_2-\frac{\sin \theta -\cos \theta}{2})\delta Q.
\end{smallmatrix}\right.$
In power system, we have $V\in[0.95,1.05]$, and $\theta\in[-\frac{\pi}{2}, \frac{\pi}{2}]$. Then, 
$\left\|\left[\frac{\partial g}{\partial z}\right]^{-1}\tfrac{\partial g}{\partial w}\right\|=\left\|\sbm{
\frac{\sin\theta -\cos\theta }{2V} & -\frac{\sin\theta+\cos\theta }{2V}\\ 
-\frac{\sin\theta+\cos\theta }{2} & -\frac{\sin\theta -\cos\theta }{2}
}\right\|$ is bounded. So we can analysis the stability of 
the DAE system by 
its reduced system. With $M=I$, we have 
$J_M(V,\theta )=\bbm{
-3+\frac{\sin \theta -\cos \theta}{2V}&-\frac{\sin \theta +\cos \theta}{2V}\\k_1-\frac{\sin \theta +\cos \theta}{2}&-3+k_2-\frac{\sin \theta -\cos \theta}{2}
}$.
We can determine the range of the elements in $J_M(V,\theta)$ within $[0.95,1.05]\times [-\frac{\pi}{2}, \frac{\pi}{2}]$. By setting $k_1=k_2=0.5$, and using the 1-norm, we find that $\mu_1\Big(J_M(V,\theta )\Big)<-1.1$. According to Corollary \ref{corollary1}, the DAE system 
is exponentially stable with respect to the equilibrium $[P^*,Q^*]^\top=[1,-1]^\top$. The trajectories of inverter-interfaced power source systems are depicted in Fig. \ref{fig:1}. With the state feedback controller, the terminal complex voltage $V\angle \theta$ converges to the desired complex voltage $1\angle 0$. 
\begingroup
\begin{figure}[!htb]
 \centering
 \includegraphics[width=1in]{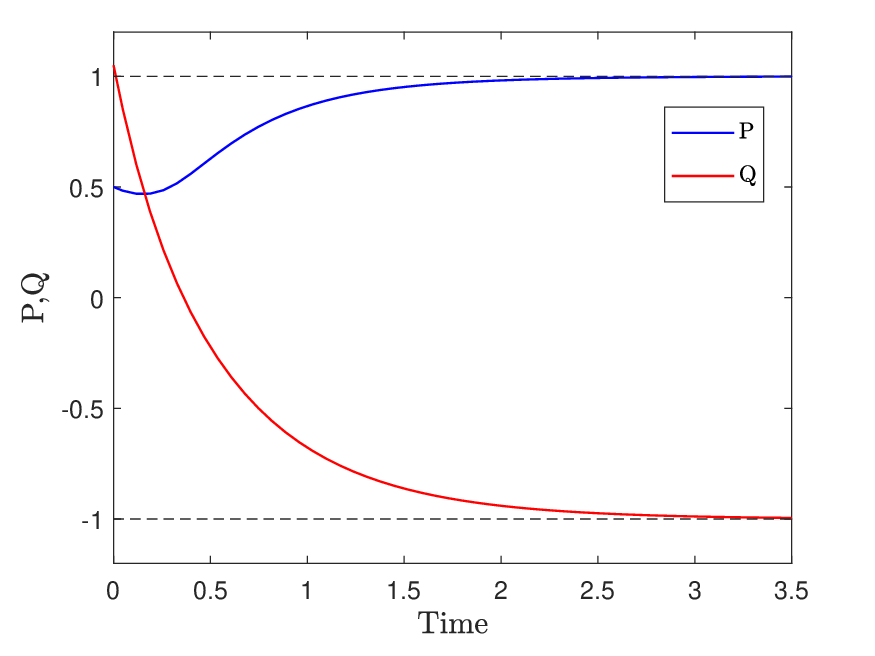}
 \includegraphics[width=1in]{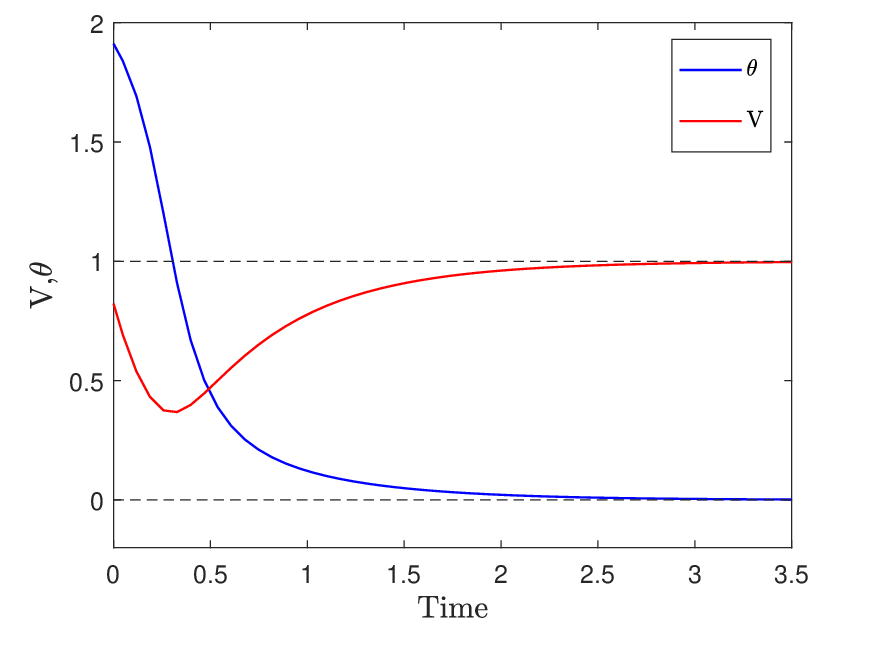}
  \caption{The plot of solutions of $P$, $Q$, $\theta$, and $V$ in Example \ref{ex2}
initialized at $\sbm{P_0\\Q_0}=\sbm{0.5\\1.05}$.} \label{fig:1}
\end{figure}
\endgroup
\end{example}

\subsection{Observer design of time-varying ODE systems}
By treating the output of the systems as an algebraic constraint, our approaches can be effectively employed in various classical control problems, including output feedback design, output regulation, and state observer design. In the following, we deploy our methodologies to develop an observer for a time-varying ODE system. Consider a time-varying ODE system
\begin{equation}\label{tvs}
 \left\{\begin{matrix}
\dot{w}=k(t,w),\\ 
z=h(t,w),
\end{matrix}\right.   
\end{equation}
where $w$ is the state and $z$ is the output. The design problem of interest is to determine an observer of the form 
\begin{equation}\label{tvo}
 \left\{\begin{matrix}
\dot{\hat{w}}=k(t,\hat{w})+l(t,\hat{z},z),\\ 
\hat{z}=h(t,\hat{w}),
\end{matrix}\right.   
\end{equation}
with $l(t,z,z)=0$, such that
\begin{equation}\label{tvo1}
\left\|\hat{w}-w\right\|\leq c\left\|\hat{w}_0-w_0\right\|e^{-\alpha t},   
\end{equation}
holds for all $t\geq 0$ 
with some positive constants $c$ and $\alpha$. In literature, a commonly used method for designing 
$l(t,\hat{z},z)$ is $l(t,\hat{z},z)=\kappa(t)(\hat{z}-z)$. As a result, the observer \eqref{tvo} can be simplified to the well-known Luenberger observer \cite{Rotella2013}.

The observer \eqref{tvo} for a given output signal $z(t)$  can be regarded as a time-varying DAE system in the form of \eqref{daet} with $f(t,\hat{w},\hat{z})=k(t,\hat{w})+l(t,\hat{z},z(t) )$, and $g(t,\hat{w},\hat{z})=h(t,\hat{w})-\hat{z}$. It is evident that $w$ and $z$ in \eqref{tvs} represent a solution of \eqref{tvo}. If \eqref{tvo} is contracting, both $w$ and $\hat{w}$ will satisfy \eqref{tvo1}. By applying Theorem \ref{theorem1} to \eqref{tvo}, we obtain the subsequent corollary.

\begin{corollary}\label{corollary0}
Given a system described by \eqref{tvs}, the system \eqref{tvo} is an observer of \eqref{tvs} if there exist an invertible matrix $M(t)$ and $\gamma>0$, such that $\mu_q\Big(J_M(t)\Big)\leq -\beta$ for some positive constant $\beta$, where $J_M(t)$ is in the form of \eqref{j1} with $A(t)=\frac{\partial k}{\partial \hat{w}}$, $B(t)=\frac{\partial l}{\partial \hat{z}}$, $F(t)=-I$, $C(t)=\gamma\frac{\partial h}{\partial \hat{w}}+\frac{\dd}{\dd t}(\frac{\partial h}{\partial \hat{w}})+\frac{\partial h}{\partial \hat{w}}\frac{\partial k}{\partial \hat{w}}$, and $D(t)=-\gamma I+\frac{\partial h}{\partial \hat{w}}\frac{\partial l}{\partial \hat{z}}$.
\end{corollary}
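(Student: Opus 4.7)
The plan is to recognize that the observer \eqref{tvo}, viewed in the unknowns $(\hat{w},\hat{z})$ with the measured signal $z(t)$ treated as a given time function, is an instance of the DAE \eqref{daet} with
$f(t,\hat{w},\hat{z}):=k(t,\hat{w})+l(t,\hat{z},z(t))$ and $g(t,\hat{w},\hat{z}):=h(t,\hat{w})-\hat{z}$. Because $\partial g/\partial \hat{z}=-I$ is constantly invertible, this DAE is automatically of index one, so Theorem \ref{theorem1} applies and only needs to be interpreted in these coordinates.

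The next step is a straightforward book-keeping exercise: compute the Jacobians that enter \eqref{auxode0} for this specific $(f,g)$. This gives $A=\partial k/\partial \hat{w}$, $B=\partial l/\partial \hat{z}$, $F=-I$, $\partial g/\partial \hat{w}=\partial h/\partial \hat{w}$, and — as the excerpt already observes for output-type constraints — $\tfrac{\dd}{\dd t}(\partial g/\partial \hat{z})=0$ since $-I$ is constant. Substituting into the general expressions for $C$ and $D$ collapses them to exactly the formulas stated in the corollary. Hence the hypothesis $\mu_q(J_M(t))\le -\beta$ of the corollary is literally the hypothesis of Theorem \ref{theorem1} for this DAE, and we conclude that the DAE \eqref{tvo} is contracting in the sense of Definition \ref{d1}.

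It remains to convert contractivity into the observer estimate \eqref{tvo1}. Here the key observation is that $(w(t),z(t))$ coming from the plant \eqref{tvs} is itself a solution of the DAE \eqref{tvo}: substituting yields $\dot{w}=k(t,w)+l(t,z,z)=k(t,w)$ by the assumed property $l(t,z,z)=0$, while $z=h(t,w)$ is the output equation. Therefore the two trajectories $(\hat{w},\hat{z})$ and $(w,z)$ are admissible solutions of the same DAE, and Definition \ref{d1} gives
\begin{equation*}
\left\|\sbm{\hat{w}(t)-w(t)\\\hat{z}(t)-z(t)}\right\|\le c\, e^{-\alpha t}\left\|\sbm{\hat{w}_0-w_0\\\hat{z}_0-z_0}\right\|.
\end{equation*}
Combined with the $p$-norm inequality $\|\hat{w}-w\|\le \|\sbm{\hat{w}-w\\\hat{z}-z}\|$ that is used repeatedly earlier in the paper, this yields \eqref{tvo1}.

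Since the substantive work — the contraction certificate via the auxiliary ODE and matrix measure — is fully delegated to Theorem \ref{theorem1}, no new estimates are required. The only point that needs some care is the bookkeeping of the Jacobians of $f$ and $g$: one must avoid spuriously introducing a dependence of $l$ on $\hat{w}$ (the observer uses $l(t,\hat{z},z)$, with $\hat{w}$ entering only through $\hat{z}=h(t,\hat{w})$ via the algebraic constraint, not inside $l$ itself), and one must make use of $\partial g/\partial \hat{z}=-I$ being constant to drop the derivative term in $D(t)$. Beyond this I do not anticipate a genuine obstacle.
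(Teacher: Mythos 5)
Your proposal is correct and follows exactly the route the paper takes (the paper's justification is the paragraph preceding the corollary: recast the observer as the DAE \eqref{daet} with $f=k+l$, $g=h-\hat z$, note that the plant trajectory $(w,z)$ solves \eqref{tvo} because $l(t,z,z)=0$, and invoke Theorem \ref{theorem1}). Your additional bookkeeping — verifying $F=-I$ gives index one, that $\tfrac{\dd}{\dd t}(\partial g/\partial\hat z)=0$, and that $l$ has no direct $\hat w$-dependence — only makes explicit what the paper leaves implicit.
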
\vspace{0.2cm}
\begin{example}\label{ex10}
Consider an unstable time-varying ODE system as presented in \cite[Ex.~4.22]{Khalil2002} 
\begin{equation}\label{oex1}
\begin{aligned}
\sbm{
\dot{w}_1\\ 
\dot{w}_2
}=\sbm{
-1+1.5\cos^2t & 1-1.5\sin t\cos t\\ 
-1-1.5\sin t\cos t & -1+1.5\sin^2t
}\sbm{
w_1\\ 
w_2
},
\end{aligned}
\end{equation}
with the output $z=w_1$. For simplifying the design process, we assume $l(t,\hat{z}-z)$ in \eqref{tvo} takes the form of $\sbm{
k_1(t)(\hat{z}-z)\\ 
k_2(t)(\hat{z}-z)
}$.
The Luenberger observer is given by
$\sbm{
\dot{\hat{w}}_1\\ 
\dot{\hat{w}}_2
}=\bbm{
-1+1.5\cos^2t & 1-1.5\sin t\cos t\\ 
-1-1.5\sin t\cos t & -1+1.5\sin^2t
}\sbm{
\hat{w}_1\\ 
\hat{w}_2
}+\sbm{
k_1(t)(\hat{z}-z)\\ 
k_2(t)(\hat{z}-z)
}$,
where the output $\hat{z}=\hat{w}_1$. By selecting $\gamma=1$, the auxiliary ODE system \eqref{auxode} of 
the observer is given by
$\left\{\begin{smallmatrix}
\dot{\xi }_{\gamma1}=(-1+1.5\cos^2t)\xi_{\gamma1}+(1-1.5\sin t\cos t)\xi_{\gamma2}+k_1(t)\nu_{\gamma}\\ 
\dot{\xi }_{\gamma2}=(-1-1.5\sin t\cos t)\xi_{\gamma1}+(-1+1.5\sin^2t)\xi_{\gamma2}+k_2(t)\nu_{\gamma}
\\ 
\dot{\nu}_{\gamma}=1.5\xi_{\gamma1}\cos^2t+(1-1.5\sin t\cos t)\xi_{\gamma2}+(k_1(t)-1)\nu_{\gamma}
\end{smallmatrix}\right.$.
By choosing $k_1(t)=-1.5\cos^2t$, $k_2(t)=-1+1.5\sin t\cos t$ it can be shown that inequality \eqref{ric} holds with $M=\sbm{
e^{-t}&0&0\\
0&e^{-t}&0\\
0&0&e^{-t}
}$
and by using the $2$-norm, we therefore have
$J_M(t)=\sbm{
-4+3\cos^2t& -3\sin t\cos t &0 \\ 
-3\sin t\cos t & -4+3\sin^2t & 0\\ 
0 & 0 &-3\cos^2t -4
}e^{-2t}\leq -I$.
The plots of error of $\hat{w}_i-w_i$ are shown in Fig. \ref{fig:3}, where the tracking property of the observer is ensured by the proposed methodologies.
\begingroup
\begin{figure}[!htb]
 \centering
 \includegraphics[width=1in]{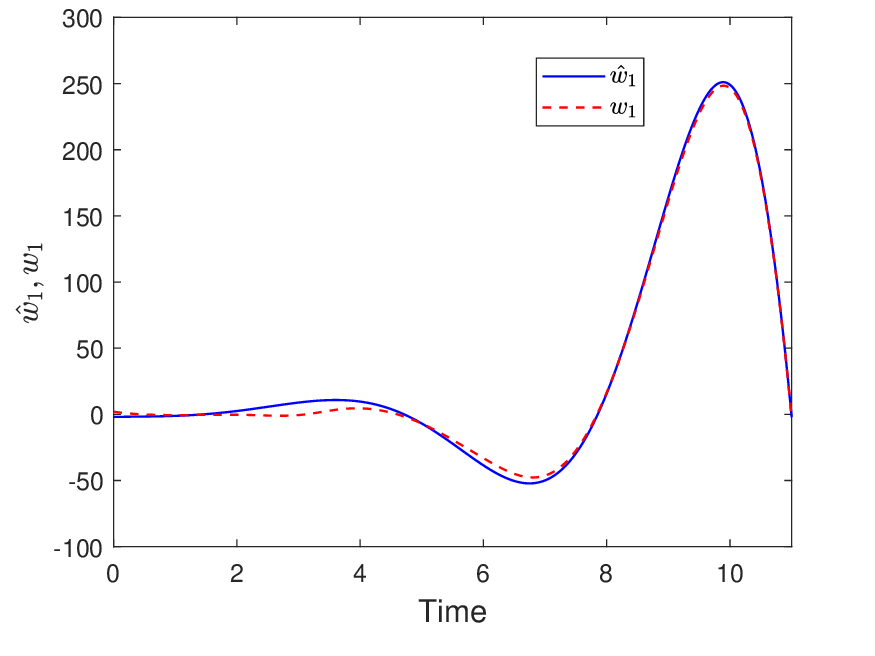}
 \includegraphics[width=1in]{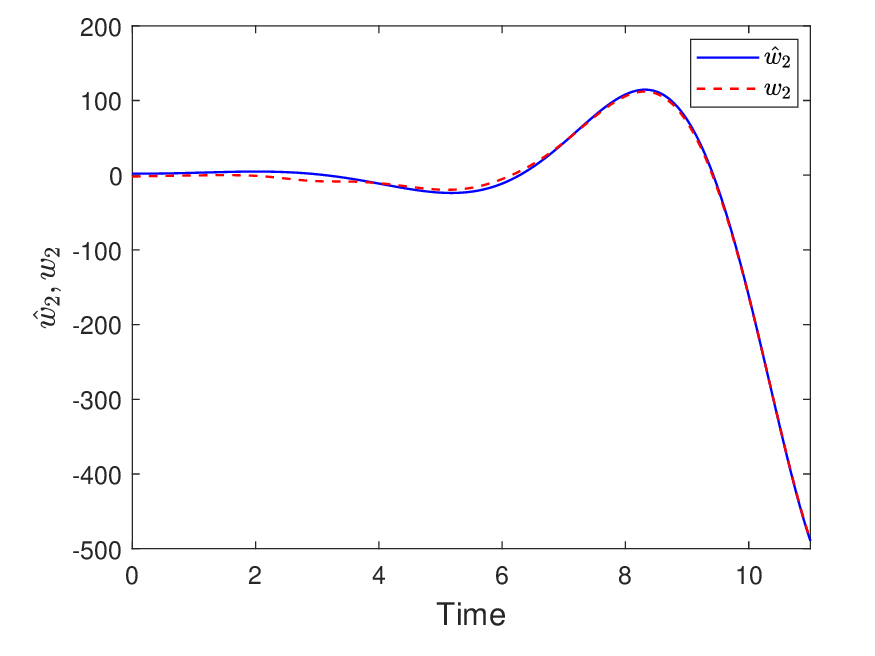}
 \includegraphics[width=1in]{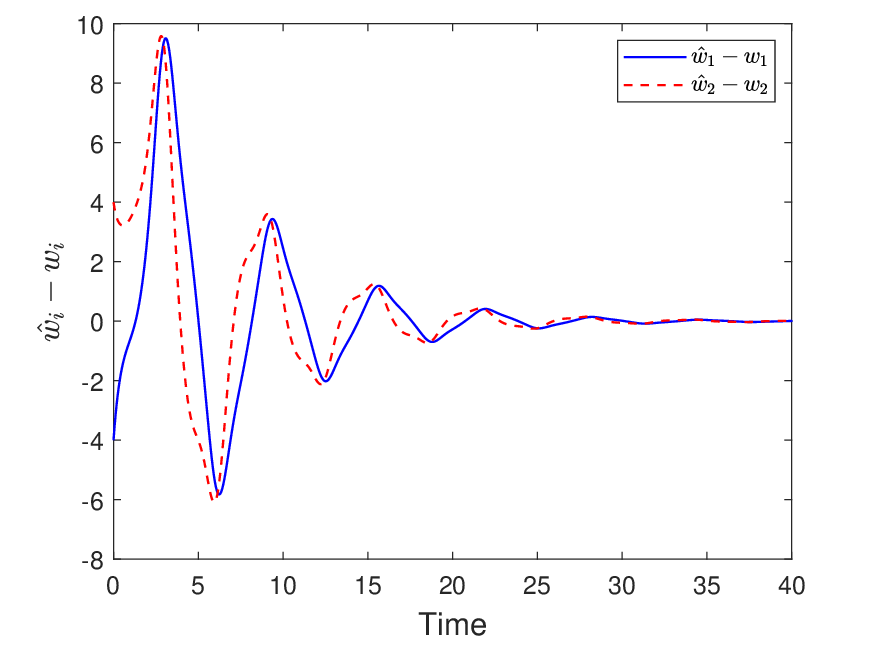}
  \caption{The plot of errors of $\hat{w}_i-w_i$ in Example \ref{ex10}
initialized at $\sbm{\hat{w}_1\\\hat{w}_2}=\sbm{2\\-2}$ and  $\sbm{w_1\\w_2}=\sbm{-2\\2}$.}\label{fig:3}
\end{figure}
\endgroup
\end{example}


\section{CONCLUSION}
In this paper, the contraction property of time-varying DAE systems have been studied by an ODE approach. It is established based on a necessary and sufficient condition that connects the contraction property of the original DAE systems and the UGES of its variational DAE system. Subsequently, the variational DAE system is lifted to an auxiliary ODE system defined on the whole state space, and the trajectories of these systems exhibit the same convergence property. The concept of matrix measure is introduced to study the UGES of the auxiliary ODE system. Moreover, the results obtained in this study can be applied to stabilize time-invariant DAE systems, and to observer design for time-varying ODE systems.

\end{document}